\providecommand{\tabularnewline}{\\}
\providecommand{\algorithmname}{Algorithm}
\providecommand{\protocolname}{Protocol}
\theoremstyle{plain}
\theoremstyle{definition}
\newtheorem{defn}{\protect\definitionname}
\theoremstyle{plain}
\newtheorem{prop}{\protect\propositionname}
\newtheorem{corollary}{Corollary}
\newtheorem{example}{Example}
\providecommand{\definitionname}{Definition}
\providecommand{\propositionname}{Proposition}
\providecommand{\theoremname}{Theorem}
\providecommand{\protocolname}{Protocol}
\begin{document}
\title{Multi-Dimensional Randomized Response}

%

\author{Josep Domingo-Ferrer,~\IEEEmembership{Fellow, IEEE,}
	and Jordi Soria-Comas
	\thanks{J. Domingo-Ferrer is with Universitat Rovira i Virgili,
Dept. of Computer Engineering and Mathematics, UNESCO Chair in Data Privacy,
CYBERCAT-Center for Cybersecurity Research of Catalonia,
		 Av. Pa\"{\i}sos Catalans 26,
		43007 Tarragona, Catalonia. E-mail josep.domingo@urv.cat 

J. Soria-Comas is with the Catalan Data Protection Authority, Rossell\'o 214, 
Esc. A, 1r 1a, 08008 Barcelona, Catalonia. E-mail jordi.soria@gencat.cat}}%
	
\markboth{~Vol.~?, No.~?, Month~YYYY}%
{Josep Domingo-Ferrer \MakeLowercase{\textit{et al.}}: High-Dimensional Randomized Response}

\IEEEtitleabstractindextext{%
	\begin{abstract}
		In our data world, a host of not necessarily trusted controllers
		gather data on individual subjects. To preserve her privacy
		and, more generally, her informational self-determination,
		the individual has to be empowered by giving her agency on 
		her own data. Maximum agency is afforded by 
		local anonymization, that allows each individual
		to anonymize her own data before handing them to the data controller.
		Randomized response (RR) is a local anonymization approach able to yield multi-dimensional full sets of anonymized microdata
		that are valid for exploratory analysis and machine learning. This is so because
		an unbiased estimate of the distribution of the true
		data of individuals can be obtained from their pooled randomized 
		data. Furthermore, RR offers 
		rigorous privacy guarantees.
		The main weakness of RR is the curse of dimensionality when
		applied to several attributes: as the number of attributes grows,
		the accuracy of the estimated true data distribution quickly degrades.
		We propose several complementary approaches to mitigate  
		the dimensionality problem. First, we present two basic protocols,
		separate RR on each attribute and joint RR for all attributes,
		and discuss their limitations. Then we introduce an algorithm
		to form clusters of attributes so that attributes
		in different clusters can be viewed as independent and joint
		RR can be performed within each cluster. After that, 
		we introduce an adjustment algorithm for the randomized data set
		that repairs some of the accuracy loss due to 
		assuming independence between attributes when using RR separately
		on each attribute or due to assuming independence between clusters in cluster-wise RR.
		We also present empirical work to illustrate the proposed methods.
	\end{abstract}
	\begin{IEEEkeywords}
		Privacy preserving data publishing, randomized response, 
		curse of dimensionality, local anonymization, multivariate data, differential privacy
\end{IEEEkeywords}}

\maketitle

%
\IEEEpeerreviewmaketitle

\section{Introduction}

Twenty years ago, National Statistical Institutes 
and a few others were the only data controllers explicitly gathering 
data on citizens, and their legal status often made them trusted. 
In contrast, in the current big data scenario, there is a host 
of controllers gathering information, and it is no longer reasonable 
to take it for granted that the individual 
subject trusts all of them to keep her data confidential 
and/or to anonymize them properly in case of release~\cite{adg55}.

Thus, to preserve her privacy and, more generally, her informational self-determination, the individual has to be empowered by giving her agency 
on her own data. Local anonymization is a paradigm
in which each individual anonymizes her data before
handing them to the data controller, thereby giving 
maximum agency to the individual. 
Several masking methods coming from 
 statistical disclosure control (SDC~\cite{Hundepool}) can be applied
locally, including generalization/recoding and noise addition. 
On the other hand, there are methods specifically
designed for local anonymization that, in addition to helping subjects
hide their responses, allow the data controller to get an accurate estimation
of the distribution of responses for groups of subjects (for example,
randomized response~\cite{citerr1965,citerr1969} and FRAPP~\cite{agrawal2005}).
Also, a number of methods have been proposed to obtain differentially
private (DP)
 preselected statistics via local anonymization, like
RAPPOR~\cite{rappor} and local 
DP~\cite{cormodemarginal2018,cormodetutorial2018}.

While most of the local anonymization 
approaches were designed to obtain 
statistics on the set of individuals
who contribute their locally anonymized input, 
randomized response (RR) has the attractive feature 
of being able to output multi-dimensional full sets of 
anonymized microdata (individual records) 
that are valid for exploratory analysis.
Indeed, an unbiased estimate of  
the distribution of the original microdata (corresponding
to the true attribute values of individuals) can be obtained 
from the empirical distribution of the released 
randomized microdata.
Most exploratory analyses
and statistical calculations
can be performed based on this estimated distribution, 
including what~\cite{psd2018} call multi-party computation with 
statistical input confidentiality. 
It is even possible to re-create 
a synthetic estimate of the original data set 
by repeating each combination of attribute values as many times
as dictated by its frequency in the estimated joint distribution.

Furthermore, the privacy guarantees afforded by RR are 
easily expressible in terms of rigorous privacy models such 
as differential privacy~\cite{Wang16,Wang14} or 
information-theoretic secrecy~\cite{secrypt}.

Unfortunately, the picture is not as rosy as suggested by 
the two previous paragraphs. Like so many methods, randomized response
suffers from the curse of dimensionality:
\begin{itemize}
\item Applying RR simultaneously to a set of attributes
amounts to applying it to the Cartesian product of those
attributes, which has a number of possible categories that
grows exponentially with the number of attributes.
Unless the number of individuals providing input is much
greater than the number of categories, the accuracy
and hence the utility of the estimated distribution 
of the original data will be poor.
\item  RR can certainly 
be applied separately to each single attribute. 
Nevertheless, by doing so the ability to estimate the joint distribution of
the original data based on the randomized data is lost; 
 only the 
marginal distributions of attributes can be estimated.
This also entails a loss of accuracy and hence a loss of utility 
of the estimated distributions.
\end{itemize}

Dimensionality issues are common to all data anonymization techniques. However, 
effectively dealing with them is tougher in the local anonymization 
paradigm because the global 
picture of the data is missing. Research in this topic is ample. 
In particular, in differential privacy and local differential privacy, 
many strategies have been proposed to cope with a high number of attributes,
 such as dealing with $k$-way marginals~\cite{calm,priview,privbayes}, 
taking advantage of sparsity~\cite{privhd,cormodemarginal2018} 
or taking advantage of the dependence between 
attributes~\cite{LDPfreq,extremal}.

\subsection*{Contribution and plan of this paper}

In this paper, we propose several complementary approaches 
to mitigate the curse of dimensionality
in randomized response. The paper's contributions are as follows:
\begin{itemize}
\item We first present two basic protocols for RR and discuss
their limitations: one performs separate RR for each attribute
and the other joint RR for all attributes.
\item We then describe an intermediate approach based
on identifying clusters of attributes such that attributes in different
clusters can be viewed as (nearly) independent. In this way,
joint RR can be performed for the attributes in each cluster.
Since our clustering algorithm requires as input the dependences between
attributes, we present several methods for assessing these dependences
in an RR scenario in which the true data of each individual 
must stay confidential to that individual.
\item After that, we introduce an adjustment algorithm of the
randomized data set that ``repairs'' some of the accuracy loss
incurred by assuming independence between attributes
when using RR separately on each attribute or assuming independence
between clusters when using RR cluster-wise.
\end{itemize}

Section~\ref{back} gives background  
 on RR, its privacy
guarantees and its estimation error.
Section~\ref{sec:basic} introduces the two basic RR protocols:
separate RR for each attribute and joint RR for all attributes.
Section~\ref{sec:RR-clusters} presents the approach
based on attribute clustering and the methods for 
privacy-preserving evaluation of attribute dependences. 
Section~\ref{sec:Adjustment} details 
the adjustment to reduce the accuracy loss caused by 
the independence assumptions.
Experimental work is reported in Section~\ref{sec:results}.
Section~\ref{related} reviews related work.
Finally, conclusions and future research directions are
gathered in Section~\ref{conc}.

\section{Background}
\label{back}

\subsection{Randomized response}
\label{sub:back_RR}

Randomized response~\cite{citerr1965,citerr1969} 
is a mechanism that respondents 
to a survey can use to protect their privacy 
when asked about the value of sensitive attribute
({\em e.g.} did you take drugs last month?).
In many respects RR was a forerunner when proposed in the 1960s:
it was not only an anonymization method
{\em avant la lettre} (before anonymization
and statistical disclosure control were introduced 
by Dalenius~\cite{Dalenius} a decade later), but it 
ushered in the even more modern notion of {\em local}
anonymization. 
Closely related to RR are 
the more recent PRAM~\cite{citepram} and FRAPP~\cite{agrawal2005} methods. 
PRAM, which stands for post-randomization method, 
differs from RR on who performs the randomization~\cite{VandenHout}:
whereas in RR it is the individual before delivering her
response, in PRAM it is the data controller after collecting
all responses (hence the name post-randomization).
FRAPP extends the original RR method along the lines 
proposed in~\cite{Chaud88}.

Beyond historical merit, a strong
point of RR is that the data collector can still 
estimate from the randomized responses 
the proportion of each of the possible {\em true} answers
of the respondents.

Let us denote by $X$ the attribute containing the answer
to the sensitive question. If $X$ can take $r$ possible
values, then 
the randomized response $Y$ reported 
by the respondent instead of $X$ follows an $r \times r$ 
matrix of probabilities
\begin{equation}
\label{probmatrix}
{\bf P} = \left( \begin{array}{ccc} p_{11} & \cdots & p_{1r} \\
\vdots & \vdots & \vdots \\
p_{r1} & \cdots & p_{rr} \end{array}\right),
\end{equation}
where $p_{uv} = \Pr(Y = v| X=u)$, for $u,v \in \{1,\ldots,r\}$
denotes the probability that the randomized response 
is $v$ when the respondent's true attribute value is $u$.

Let $\pi_1,\ldots,\pi_r$ be the proportions of respondents whose true values 
fall in each of the $r$ categories of $X$ and let 
$\lambda_v = \sum_{u=1}^r p_{uv} \pi_u$ for $v=1,\ldots, r$, 
be the probability of the reported value $Y$ being $v$.
If we define ${\bf \lambda} = (\lambda_1,\ldots,\lambda_r)^T$ 
and ${\bf \pi} = (\pi_1,\ldots, \pi_r)^T$, it holds that
${\bf \lambda} = {\bf P}^T {\bf \pi}$. Furthermore, 
if $\hat{\bf \lambda}$ is the vector of sample proportions
corresponding to ${\bf \lambda}$ and ${\bf P}$ is nonsingular,
in Chapter 3.3 
of~\cite{Chaud88} it is proven that an unbiased estimator
${\bf \pi}$ can be computed as
\begin{equation}
\label{unbiased}
\hat{\bf \pi}  = ({\bf P}^T)^{-1} \hat{\bf \lambda}
\end{equation}
and an unbiased estimator of the dispersion matrix is also provided.
In particular, 
the larger the off-diagonal probability mass in ${\bf P}$, the more
dispersion (and the more respondent protection).

The estimation obtained from Equation (\ref{unbiased}) may
not be a proper probability distribution: it may have
values below 0 and above 1. This happens when the empirical
distribution of the randomized data is not consistent with 
the randomization matrix. For instance, if all the values
in the first column of ${\bf P}$ are greater than 0.5, then we 
should expect the frequency of the first category to be greater
than 0.5. If it is not, then Equation (\ref{unbiased}) will
necessarily return some negative values. 
In~\cite{Alvim2018} an iterative Bayesian update is proposed that converges 
to a proper probability distribution. In 
Section~\ref{truedist}, we describe a simpler 
solution to ensure a proper distribution. 

\subsection{Privacy guarantees}
\label{sec3}
The confidentiality guarantee given by RR 
results from each individual potentially altering her response by 
randomly drawing from a previously fixed distribution. Thus, given the
individual's randomized response, 
we are uncertain about what her true response would have been. 

In spite of the previous intrinsic guarantee of randomized
response 
and given the popularity of differential privacy~\cite{Dwork2006}, we will also
quantify the privacy afforded by 
randomized response in terms of differential privacy.
However, we would like to remark that {\em attaining a given level of 
differential privacy is not the goal of this work}. Among the algorithms
 we propose, some enforce differential privacy while others only
qualify as differentially private if certain assumptions 
are made about the information
that is publicly available. While we do not claim that such algorithms
are differentially private, we would like to note that making assumptions
about externally available information is usual (e.g. see invariants in~\cite{issues}).

A randomized query function $\kappa$ gives $\epsilon$-differential
privacy if, for all
data sets $D_{1}$, $D_{2}$ such that one can
be obtained from the other by modifying a single record,
and all $S\subset Range(\kappa)$, it holds 
\begin{equation}
\label{dp1}
\Pr(\kappa(D_{1})\in S)\le\exp(\epsilon)\times \Pr(\kappa(D_{2})\in S).
\end{equation}
In plain words, the presence or absence of any single record
is not noticeable (up to $\exp(\epsilon)$) when seeing the outcome of the query.
Hence, this outcome can be disclosed without
impairing the privacy of any of the potential respondents
whose records might be in the data set.
A usual mechanism 
to satisfy Inequality (\ref{dp1}) is to add noise to the 
true outcome of the query, in order to obtain an outcome of $\kappa$ 
that is a noise-added version of the true outcome. The 
smaller $\epsilon$, the more noise is needed to 
make queries on $D_1$ and $D_2$ indistinguishable up 
to $\exp(\epsilon)$.

In~\cite{Wang16,Wang14}, a connection between randomized response 
and differential privacy is established: 
randomized response is $\epsilon$-differentially private if 
\begin{equation}
\label{eqdp}
e^\epsilon \geq \max_{v=1,\ldots,r} \frac{\max_{u=1,\ldots,r} p_{uv}}{\min_{u=1,\ldots,r} p_{uv}}.
\end{equation}
The rationale is that the values in each column $v$ ($v \in \{1,\ldots,r\}$)
of matrix ${\bf P}$ correspond to the probabilities of the reported
value being $Y=v$, given that the true value is $X=u$ 
for $u \in \{1,\ldots,r\}$. 
Differential privacy requires that the 
maximum ratio between the probabilities in a column be bounded 
by $e^{\epsilon}$, so that the 
influence of the true value $X$ on the reported value $Y$ 
is limited. Thus, the reported value can be released
with limited disclosure of the true value.

\subsection{Frequency estimation error}
\label{sub:back_freq_error}

We want to minimize the error in the estimation of $\hat{\bf \pi}$. Following 
Equation (\ref{unbiased}),
the error in $\hat{\bf \pi}$ comes from two sources: (i) the error in 
the estimation
of $\hat{\bf \lambda}$, and (ii) the propagation of that error when
computing the product
$({\bf P}^T)^{-1} \hat{\bf \lambda}$.

Following~\cite{agrawal2005}, the propagation error 
is lower-bounded by $P_{max}/P_{min}$,
where $P_{max}$ and $P_{min}$ are the maximum and minimum 
eigenvalues of ${\bf P}^T$.
Indeed,~\cite{agrawal2005} show that to minimize the 
propagation of the error the 
randomization matrix must have the form
\[{\bf P} = \left( \begin{array}{cccc} 
p_{u} & p_d & \cdots & p_d \\
p_d   & p_u & \ddots & \vdots \\
\vdots& \ddots &\ddots & p_d \\
p_d & \cdots & p_d &p_u
\end{array}\right),\]
with $p_u \geq p_d$.
Provided we use a randomization matrix that minimizes the propagation of the
error, the error in $\hat{\bf \pi}$ is a function of the
error in $\hat{\bf \lambda}$. In the rest of this section, we bound the error in
the estimation of  $\hat{\bf \lambda}$.


We can view the sample of $Y$ 
as a draw from a multinomial distribution 
with $n$ trials (the data set size) and probabilities $\lambda$. Thus, it is
natural to derive the estimate  $\hat{\lambda}$ 
from the observed frequencies of $Y$.
Let us deal with the accuracy of $\hat{\lambda}$,
which in~\cite{Thompson87} 
is measured using confidence intervals as follows. 

\begin{defn}
	The absolute error of $\hat{\lambda}$ as an estimation of $\lambda$ is $e_{abs}$
	with confidence $\alpha$ if
	\[\Pr [\bigcap_{u=1,\ldots,r}(\hat{\lambda}_u-e_{abs} \le \lambda_u \le \hat{\lambda}_u + e_{abs})]\ge 1-\alpha.\]
\end{defn}

The absolute error can be determined, based on the frequencies and the data set size, as
\begin{equation}
\label{abserr}
 e_{abs} = \max_{u=1,\ldots,r} \sqrt{B\lambda_u (1-\lambda_u)/n},
\end{equation} 
where $B$ is the $\alpha/r$ upper percentile of the $\chi^2$ distribution with 1 degree of freedom.

The absolute error grows with $\sqrt{B}$, which in turn grows with
the number of categories $r$ as shown in Figure~\ref{fig:B}.
While the impact $r$ 
on the absolute error via $\sqrt{B}$ seems limited, 
the actual effect of increasing $r$ on 
Expression (\ref{abserr}) can be greater.
 The reason is that increasing the number of categories decreases
the absolute frequency of each category, 
which makes the {\em relative} error of $\hat{\lambda}$ more noticeable.
\begin{figure}
	\centering
	\includegraphics[width=0.8\linewidth]{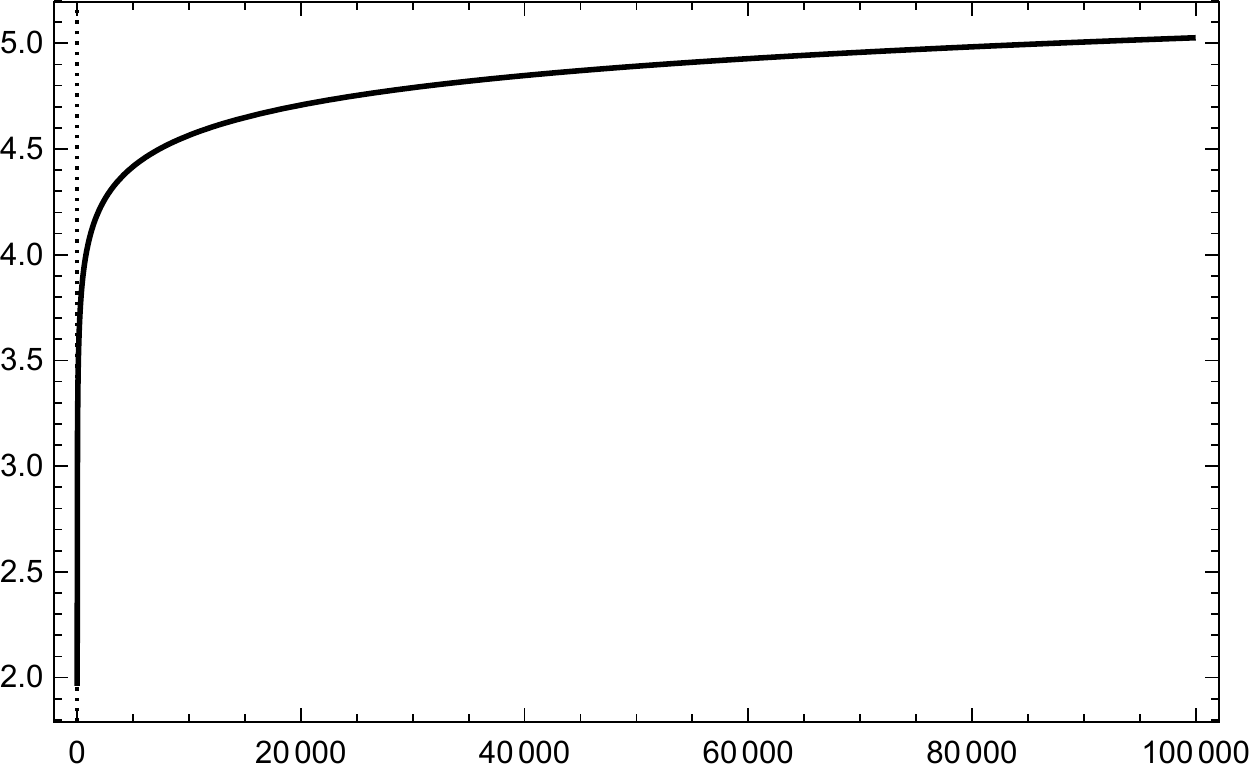}
	\caption{Evolution of the factor $\sqrt{B}$ ($y$-axis) of 
the absolute error of $\hat{\lambda}$ in terms of the number of 
		categories $r$ ($x$-axis) when $\alpha=0.05$}
	\label{fig:B}
\end{figure}

\begin{defn}
	The relative error of $\hat{\lambda}$ as an estimation of $\lambda$ is $e_{rel}$
	with confidence $\alpha$ if
	\[ \Pr [\bigcap_{u=1,\ldots,r}( (1-e_{rel})\hat{\lambda}_u \le \lambda_u \le (1+e_{rel})\hat{\lambda}_u]\ge 1-\alpha. \]
\end{defn}

Analogously to Expression (\ref{abserr}), we can 
determine the relative error based on the
 frequencies and the data set size:
\begin{equation}
e_{rel} = \max_{u=1,\ldots,r} \sqrt{B\frac{1-\lambda_u}{\lambda_u}\frac{1}{n}},\label{eq:rel_err}
\end{equation} 
	where $B$ is the $\alpha/r$ upper percentile of the $\chi^2$ distribution with 1 degree of freedom.

\section{Basic RR protocols: RR-Independent and RR-Joint}
\label{sec:basic}

Assume $n$ parties $i=1,\ldots, n$ 
each holding one record ${\bf x}_i=(x^1_i,$ $\ldots,$ $x^m_i)$ 
that contains the values for $m$ attributes.
A data controller wants to perform exploratory analysis 
and/or machine learning tasks on the pooled data of the $n$
parties. However, no party wants to disclose her true record
even if she is ready to disclose a masked version of it.

In the above situation RR is a good option, as motivated
in the previous sections.
In this section we describe two basic 
RR methods to estimate the pooled true data of the parties, 
and we also highlight the limitations of such methods.

\subsection{RR-Independent}
\label{sub:basic_rr_ind}
This is the most naive solution. 
Each party $i$ separately deals with each attribute value $x_i^j$
for $j=1,\ldots,m$ via RR.
If the $j$-th attribute $A^j$ can take $r^j$ different
values, then an $r^j \times r^j$ probability matrix
${\bf P}^j$ (see Expression (\ref{probmatrix})) can 
be used for each party to report a 
randomized value $y_i^j$ for $A^j$ instead of her 
true value $x_i^j$. See Protocol~\ref{alg:RR-indep}.

\begin{protocol}
	\caption{RR-Independent}
	\label{alg:RR-indep}
	\begin{enumerate}
		\item {\bf Randomization Protocol}
		\item Let $x_i^j$ be the value of party $i$ for attribute $j$.
		\item Let ${\bf P}^j$ be the randomization matrix for attribute $j$.
		
		\item Each party $i=1,\ldots,n$ applies RR with matrix ${\bf P}^j$
		to $x_i^j$, for $j=1,\ldots,m$, and publishes the result.
		\vspace{0.3cm}
		\item {\bf Attribute Distribution Estimation}
		\item Let $\hat{\lambda}^j$ be the experimental distribution of the randomized attribute $A^j$.
		\item The distribution of $A^j$ is estimated as $\hat{\pi}^j=(({\bf P}^j)^T)^{-1}\hat{\lambda^j}$.
		\vspace{0.3cm}
		\item {\bf Joint Distribution Estimation}
		\item Let ${\bf S}\subset A^1\times\ldots\times A^m$ be a subset of the data domain.
		\item The frequency of ${\bf S}$ is estimated as 
		$\sum_{(x^1,\ldots,x^m)\in {\bf S}} \hat{\pi}^1(x^1)\times \ldots \times \hat{\pi}^m(x^m)$.
	\end{enumerate}
\end{protocol}

As mentioned in Section~\ref{sub:back_RR}, this would allow 
all parties to approximate the {\em marginal} empirical 
distribution ${\bf \pi}^j=(\pi^j_1,\ldots, \pi^j_{r^j})$ of 
each attribute $A^j$ as 
\[ \hat{\bf \pi}^j  = (({\bf P}^j)^T)^{-1} \hat{\bf \lambda}^j, \]
where $\hat{\bf \lambda}^j$ is the empirical distribution of attribute $A^j$ in 
the data set ${\bf Y}$ containing the randomized responses.

The problem is that estimating the marginal empirical distributions
of attributes does not yield in general an estimate of the joint
empirical distribution of the data set ${\bf X}$ formed by 
the true responses.
Only if the attributes in ${\bf X}$ are (nearly) independent
can their joint distribution be estimated from the 
marginal attribute frequencies. 
In this case, the frequency
of a set ${\bf S}\subset A^1 \times \ldots \times A^m$ can be estimated as
$\sum_{(x^1,\ldots,x^m)\in {\bf S}} \hat{\pi}^1(x^1)\times \ldots \times \hat{\pi}^m(x^m)$.
Clearly, the more dependent the attributes, the less accurate
is the previous estimate.

With respect to the computational cost, the fact that each attribute is dealt with
separately is positive because the randomization matrices remain small. The computational
complexity for each of the individuals that participate in the protocol is as follows:
\begin{itemize}
	\item Estimating the distribution of the true values of 
attribute $A^j$ as per Expression (\ref{unbiased}) 
amounts to computing the inverse of an $|A^j|$-dimensional 
	matrix followed by an $|A^j|$-dimensional matrix-vector product. 
	The actual cost is dominated by the matrix inversion. 
If using Strassen's algorithm (the best-performing
	practical algorithm, ~\cite{strassen}), inversion takes $\mathcal{O}(|A^j|^{2.807})$. In the particular case of the randomization matrices described in 
	Section~\ref{sub:back_freq_error}, their regularity makes it possible to easily compute their inverses with a
	cost $\mathcal{O}(|A^j|^2)$.
	\item The cost of estimating the joint frequency of one combination of attribute values is $\mathcal{O}(m)$. 
\end{itemize}

\subsection{RR-Joint}
\label{sub:RR-Joint}

To estimate the frequency of an arbitrary set 
${\bf S}\subset A^1 \times \ldots \times A^m$
without requiring attribute independence, 
we need to directly estimate the joint distribution.
To do this via RR, 
each party must report her randomized response 
for the value of $A^1 \times \ldots \times A^m$. 
After this, the frequency
of a set ${\bf S}\subset A^1 \times \ldots \times A^m$ can be estimated as
$\sum_{{\bf x}\in {\bf S}} ({\bf P}^T)^{-1} \hat{\bf \lambda}({\bf x})$,
where $\hat{\bf \lambda}(\bf x)$ is the empirical distribution of {\bf x} in ${\bf Y}$.
See Protocol~\ref{alg:RR-Joint}.

Two observations are in order. First, thanks to RR,
all parties preserve the confidentiality of their true inputs
during Protocol~\ref{alg:RR-Joint}. 
Second, once the estimate of the empirical
joint distribution of ${\bf X}$ is published,  
any parties can perform statistical computations
on it; they can even create a synthetic 
data set by repeating each combination of 
$A^1 \times A^2 \times \ldots \times A^m$ as many times
as dictated by its frequency in the joint distribution.

\begin{protocol}
	\caption{RR-Joint\label{alg:RR-Joint}}
	\begin{enumerate}
		\item {\bf Randomization Protocol}
		\item Let ${\bf x}_i$ be the record of party $i$.
		\item Let ${\bf P}$ be the randomization matrix of $A^1 \times \ldots \times A^m$.
		\item Each party $i=1,\ldots,n$ runs RR with matrix ${\bf P}$
		on ${\bf x}_i$ and publishes the result.
		\vspace{0.3cm}
		\item {\bf Joint Distribution Estimation}
		\item Let ${\bf S}\subset A^1\times\ldots\times A^m$ be a subset of the data domain.
		\item The frequency of ${\bf S}$ is estimated as 
		$\sum_{{\bf x}\in {\bf S}} ({\bf P}^T)^{-1} \hat{\lambda}({\bf x})$.
	\end{enumerate}
\end{protocol}

Unfortunately, direct estimation of the joint distribution is not 
without limitations. As the number of attributes grows linearly, the number
of categories of the Cartesian product grows exponentially.
This causes the computational cost to increase and the 
 accuracy of the frequency estimates to decrease in ways
that are not acceptable.

The high computational cost comes from having a 
vector of frequencies
of exponential size $\Pi_{j=1,\ldots,m}|A^j|$ and a 
(huge) randomization matrix with 
 $\Pi_{j=1,\ldots,m}|A^j|$ rows and the same number of columns. 
According to Expression (\ref{unbiased}), to estimate 
$\hat{\pi}$ we need to multiply the inverse of the 
transpose of the randomization matrix times 
$\hat{\lambda}$. Even if we assume that 
the (computationally costly) inverse matrix is available so that we only need to perform the matrix multiplication, 
the cost remains exponential in the number of attributes.

Regarding the accuracy of the frequency estimates, 
direct estimation of the joint distribution only works well if the number of 
parties $n$ is much larger than the number of possible
values of the above Cartesian product, that is, when
\begin{equation}
\label{nbound}
n \gg |A^1| \times |A^2| \times \ldots \times |A^m|.
\end{equation}
The necessity of Bound (\ref{nbound}) becomes obvious 
when we analyze the error as per
Expression (\ref{eq:rel_err}) for $n = |A^1| \times |A^2| 
\times \ldots \times |A^m| = r$.
Even if frequencies $\lambda_u$  
were evenly distributed and equal 
to $1/r$ ---which would minimize 
their relative estimation error---,
 Expression (\ref{eq:rel_err}) would yield approximately $\sqrt{B}$. 
By looking at Figure~\ref{fig:B}, 
we observe that $\sqrt{B}$ is too big (above $200\%$) to be acceptable as 
a relative error.

\subsection{Accuracy analysis}
\label{sub:accuracy}

Let us compare the relative error achieved 
by Protocols RR-Inde\-pend\-ent and RR-Joint. As above,
we will perform the analysis in the best case, that is, 
when frequencies are evenly distributed.

Thus, for R-Independent we take 
$\lambda^j_u = 1/|A^j|$ for 
all $j=1,\ldots,m$ and all $u=1,\ldots,|A^j|$.
According to Expression (\ref{eq:rel_err}), 
the relative error of attribute frequencies
in RR-Independent is
\[e_{rel} = \max_{j=1,\ldots,m} \sqrt{B^j \frac{|A^j|-1}{n}},
\]
where $B^j$ is the $(\alpha/|A^j|)$ upper percentile of a $\chi^2$ distribution with one degree of freedom.

Similarly, the relative error of the estimated frequencies in RR-Joint is
\[e_{rel} = \sqrt{B \frac{\Pi_{j=1,\ldots,m}|A^j|-1}{n}},
\]
where $B$ is the $(\alpha/\Pi_{j=1,\ldots,m}|A^j|)$ upper percentile of a $\chi^2$ distribution with one degree of freedom.

Notice that the relative error grows as the square root of the number of categories, which is exponential in 
the number of attributes. Thus, RR-Joint is likely to have poor accuracy estimates already for a small number of
attributes. As described in Section~\ref{sub:RR-Joint}, this can only be mitigated by a large data set size $n$,
which becomes unrealistic already for a moderate number of attributes.

\section{RR-Clusters}
\label{sec:RR-clusters}

Neither Protocol RR-Independent nor Protocol RR-Joint are satisfactory,
the former due to the independence requirement and the latter
due to the combinatorial explosion of the number of categories. 
In this section,
we propose RR-Clusters, a protocol that 
 strives to use as little as possible 
the independence assumption, while keeping a
reasonable computational cost and estimation accuracy.

The RR-Clusters protocol splits attributes into clusters according to their 
mutual dependence and 
{\em performs RR-Joint independently on each of the 
attribute clusters}.
 
 Running RR-Joint separately on each attribute cluster implies 
neglecting the possible dependences between 
attributes in different clusters. Thus, 
we need to determine clusters in such a way
that no significant dependence exists between any
 two attributes in different clusters. 

Additionally, to keep the computational cost and the 
estimation error within reasonable bounds, we want
the cardinality of the Cartesian product of attributes 
within each cluster to 
be small compared with the number
of records of the data set.
Thus, the clustering algorithm should
try to place in the same cluster
only those attributes that have a strong mutual dependence.

More specifically, we want a set of clusters $C_1,\ldots,C_l$, 
for some $l$, such that:
\begin{itemize}
	\item $\bigcup_{k=1,\ldots,l}C_k = \{A^1,\ldots,A^m\}$ 
and $C_i \cap C_j =\emptyset$ for $i\ne j$;
	\item It holds that $n \gg \max_{k=1,\ldots,l}\Pi_{j\in C_k} |A^j|$;
	\item The dependence between attributes in different clusters is
as low as possible.
\end{itemize}  

Since attributes are to be clustered based on their dependences,
we start by describing the clustering algorithm assuming that 
those dependences are available.
Computing the dependences between attributes 
would be easy if one could
resort to a trusted party holding the entire data set. Lacking a trusted party, we need methods 
to assess attribute dependence without requiring parties to disclose their 
data. We describe such methods 
 in Sections~\ref{sub:rr_ind},
\ref{sub:sec_sum} and~\ref{sub:rr_pair}; they differ
in their accuracy and in the disclosure risk for 
the parties' true attribute values.

The clustering algorithm is formalized in Algorithm~\ref{alg:clusters}.
It starts with single-attribute
clusters. It then loops through the list of cluster 
pairs in descending order
of dependence, and merges two clusters  
if the number of combinations
of attribute values in the merged cluster remains below a given threshold. 
Additionally, to avoid
clustering attributes that are not really dependent, 
the algorithm uses a threshold on the dependence measure below
which clusters are not merged.
The dependence between two clusters of attributes is defined
as the maximum dependence between pairs of attributes such
that one attribute is in one cluster and the other attribute
is in the other cluster. 

\begin{algorithm}
	\caption{Clustering of attributes based on their dependence}
	\label{alg:clusters}
\begin{enumerate}
	 \item Let $T_v$ be the maximum number of combinations of 
attribute values allowed in a cluster
	 \item Let $T_d$ be the minimum dependence required between
two clusters for them to be merged 
\item Let $Clusters=\{\{A^1\},\ldots,\{A^m\}\}$ 
	 \item Let $DependenceList$ be the list of dependences 
between cluster pairs
	 \vspace{0.3cm}
	 
	 \item Sort $DependenceList$ in descending order
	\item Let $dep$ be the first element of $DependenceList$
	 \item While $dep \geq T_d$ do
	 	\item \hspace{0.3cm} Let $(C_1, C_2)$ be the cluster 
pair whose dependence is $dep$
	 	\item \hspace{0.3cm} If $\Pi_{A\in C_1 \cup C_2} |A| \le T_v$ then
	 	\item \hspace{0.6cm}    Remove $C_1$ and $C_2$ from $Clusters$
	 	\item \hspace{0.6cm}    Add $C_1 \cup C_2$ to $Clusters$
  \item \hspace{0.6cm}    Recompute $DependenceList$ for $Clusters$
\item \hspace{0.6cm}    Sort $DependenceList$ in descending order
\item \hspace{0.6cm} Let $dep$ be the first element of  $DependenceList$ 
 \item \hspace{0.3cm} Else 
\item \hspace{0.6cm} Let $dep$ be the next element of  $DependenceList$			\item \hspace{0.3cm} End if
	 \item End while
\end{enumerate}
	
\end{algorithm}

The specific measure of dependence to be used must take
into account the type of the attributes. We will use
the following dependence metrics adopted in~\cite{psd2018}.
If $A^i$ and $A^j$ are ordinal, we can take as a measure
of dependence 
\begin{equation}
\label{indnum}
|r_{ij}|,
\end{equation}
where $r_{ij}$ is Pearson's
correlation coefficient between $A^i$ and $A^j$. 
Expression (\ref{indnum}) can also be used for continuous
numerical attributes, but to be accommodated by RR these 
need to be discretized into ordinal attributes
(for example by
rounding or by replacing values with intervals).

If one of $A^i$ and $A^j$ is nominal (without 
an order relationship between its possible values) and the other
is nominal or ordinal, we can take as a measure of independence
\begin{equation}
\label{indcat}
V_{ij},
\end{equation}
where $V_{ij}$ is Cram\'er's V statistic~\cite{Cram46}, that gives 
a value between 0 and 1, with 0 meaning complete independence
between $A^i$ and $A^j$ and 1 meaning complete dependence.
Cram\'er's $V_{ij}$ is computed as 
\[ V_{ij} = \sqrt{\frac{\chi^2_{ij}/n}{\min(r^i -1, r^j-1)}}, \]
where $r^i$ is the number of categories of $A^i$, 
$r^j$ is the number of categories of $A^j$, $n$ 
is the total number of parties/records and 
$\chi^2_{ij}$ is the chi-squared independence statistic defined as
\[\chi^2_{ij} = \sum_{a=1}^{r^i} \sum_{b=1}^{r^j} \frac{(o^{ij}_{ab} - e^{ij}_{ab})^2}{f^j_{ab}},\]
with $o^{ij}_{ab}$ the observed 
frequency of the combination $(A^i=a, A^j=b)$ and $e^{ij}_{ab}$ 
the expected frequency of that combination under the independence
assumption for $A^i$ and $A^j$. This expected frequency is computed as
\[ e^{ij}_{ab} = \frac{n^i_a n^j_b}{n},\]
where $n^i_a$ and $n^j_b$ are, respectively, 
the number of parties who have reported $A^i=a$ and $A^j=b$.

Finally, if one of $A^i, A^j$ is nominal and the other 
is numerical, the latter must be discretized.
After that,
the contingency table between $A^i$ and $A^j$ can be 
constructed, and the measure of dependence given
by Expression (\ref{indcat}) can be computed.

Note that Expressions (\ref{indnum}) and (\ref{indcat}) are
bounded in $[0,1]$, and hence 
the outputs of both expressions are comparable when trying 
to cluster the attributes.

With respect to the computational cost, the fact that the number of combinations of attribute values within
each cluster is relatively small (because 
each cluster contains only a subset of attributes) 
leads to a set of randomization matrices 
of relatively small size; this is positive to
keep the computational cost reasonable. The computational cost for each of the individuals that participate 
in the protocol is as follows:
\begin{itemize}
	\item Estimating the joint distribution of 
the true values of attributes in cluster $C_k$ as per 
Expression (\ref{unbiased}) 
amounts to computing the inverse of a $\prod_{A\in C_k}|A|$-dimensional 
	matrix followed by a $\prod_{A\in C_k}|A|$-dimensional matrix-vector product. As far as the order is concerned we can overlook
	the less costly matrix-vector product. 
	As to the cost of computing the inverse matrix, with Strassen's
algorithm it is $\mathcal{O}(\prod_{A\in C_k}|A|^{2.807})$. 
	\item The cost of estimating the joint frequency of one combination of attribute values is $\mathcal{O}(l)$, where $l$ is the number of clusters. 
\end{itemize}

Since ${\bf X}$ is formed by the true responses of the parties
and these do not disclose them, 
no single party can compute the dependences between
the attributes in ${\bf X}$. In the following subsections, we describe 
several methods to compute such dependences based on partial 
and/or inaccurate information submitted
by the individual parties. Each method has a different level of accuracy as well as a different 
impact on privacy.
The overall privacy risk is the aggregation of the risk that results 
from 
the computation of the dependences between attributes and the 
risk that results
from applying RR to the resulting clusters of attributes. 
In differential privacy terms,
the sequential composition property applies~\cite{composition}: 
if the computation of the dependences
between attributes is $\epsilon_1$-differentially private and 
the RR data release 
is $\epsilon_2$-differentially private, overall one has 
 $\epsilon_1+\epsilon_2$-differential privacy.

In Subsection~\ref{sub:rr_ind}, we describe an efficient 
method for computing the dependences 
between the attributes in ${\bf X}$  when the dependence measure is the covariance. Subsections~\ref{sub:sec_sum} and \ref{sub:rr_pair} deal with arbitrary dependence measures and differ in the adversary model. Taking advantage of a common distinction between confidential and quasi-identifying attributes, Subsection~\ref{sub:sec_sum} computes the dependence measure over each pairwise distribution of attributes. Subsection~\ref{sub:rr_pair} avoids such a distinction and leverages 
randomized response to compute the dependences between attributes.

\subsection{Randomized response on each attribute}
\label{sub:rr_ind}

In this section, we approximate the dependences between pairs of attributes
based on a data set ${\bf Y}$ obtained by independently 
randomizing each attribute in 
${\bf X}$ (see Section~\ref{sub:basic_rr_ind}).
Dependences between attributes in ${\bf Y}$ are likely
to be attenuated versions of those in ${\bf X}$,
but as long as the former dependences preserve 
the ranking of the latter, attribute clustering
based on the former should be fine.
In other words, if in ${\bf X}$ the dependence 
between $A^1$ and $A^2$ is stronger than 
the dependence between $A^3$ and $A^4$,
our requirement is that the same relation hold in ${\bf Y}$.

To analyze the effect of RR on the dependence between
attributes, we view each attribute of ${\bf X}$ as a random variable
whose distribution is the empirical distribution of the attribute
values.
Note that this view is not entirely accurate, because 
it is unlikely that the same set of values is obtained if  
the attribute's random variable is sampled.
However, this is a useful approximation that 
significantly simplifies the analysis.

Proposition~\ref{prop:cov} examines 
the effect on the covariance between attributes when RR
is independently run on each attribute.
This proposition is later used in Corollary~\ref{cor:cov} to show that 
when RR is appropriately run, 
the relative strength of the covariances between pairs of attributes
is not altered.

\begin{prop}\label{prop:cov}
	Let $X^a$ and $X^b$ be two finite random variables
	that are 
independently randomized into $Y^a$ and $Y^b$, respectively, 
	as follows:
	\begin{itemize}
		\item With probability $p_a$ (resp. $p_b$),  $Y^a:=X^a$ (resp. 
		$Y^b:=X^b$).
		\item With probability $1-p_a$ (resp. $1-p_b$), $Y^a:=U^a$ 
		(resp. $Y^b:=X^b$), where $U^a$ (resp. $U^b$) is a random variable that is uniformly distributed on the support of $X^a$ (resp. $X^b$). 
	\end{itemize}
	Then the covariance of $Y^a$ and $Y^b$ is
	$Cov(Y^a,Y^b)=p_a \times p_b \times Cov(X^a,X^b)$.
\end{prop}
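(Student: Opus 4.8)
The plan is to linearize the randomization by introducing independent Bernoulli selector variables and then exploit independence to factor the relevant expectations. Concretely, let $B_a$ and $B_b$ be independent Bernoulli variables with $\Pr(B_a=1)=p_a$ and $\Pr(B_b=1)=p_b$, drawn independently of $(X^a,X^b)$ and of the uniform replacements $U^a,U^b$. Then both randomizations admit the single closed form
\[ Y^a = B_a X^a + (1-B_a)U^a, \qquad Y^b = B_b X^b + (1-B_b)U^b. \]
The structural fact I would record up front is the independence pattern: $B_a,B_b,U^a,U^b$ are mutually independent and jointly independent of $(X^a,X^b)$, whereas $X^a$ and $X^b$ may be arbitrarily dependent (that is precisely the covariance we want to track). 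Write $\mu_a=E[X^a]$, $\mu_b=E[X^b]$ for the means of the true variables and $\nu_a=E[U^a]$, $\nu_b=E[U^b]$ for the means of the uniform replacements.

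First I would compute the product moment $E[Y^aY^b]$. Expanding the product yields four terms, $B_aB_bX^aX^b$, $B_a(1-B_b)X^aU^b$, $(1-B_a)B_bU^aX^b$ and $(1-B_a)(1-B_b)U^aU^b$. Taking expectations, the selectors factor out as $p_ap_b$, $p_a(1-p_b)$, $(1-p_a)p_b$ and $(1-p_a)(1-p_b)$ respectively, and in the last three terms the uniform variables detach from the $X$'s because $U^a,U^b$ are independent of $X^a,X^b$ and of each other. This gives
\[ E[Y^aY^b] = p_ap_b\,E[X^aX^b] + p_a(1-p_b)\mu_a\nu_b + (1-p_a)p_b\nu_a\mu_b + (1-p_a)(1-p_b)\nu_a\nu_b. \]

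Next I would expand $E[Y^a]E[Y^b]$ starting from $E[Y^a]=p_a\mu_a+(1-p_a)\nu_a$ and the analogous expression for $Y^b$. This product contributes exactly the same three mixed terms $p_a(1-p_b)\mu_a\nu_b$, $(1-p_a)p_b\nu_a\mu_b$ and $(1-p_a)(1-p_b)\nu_a\nu_b$, together with $p_ap_b\mu_a\mu_b$. Forming $Cov(Y^a,Y^b)=E[Y^aY^b]-E[Y^a]E[Y^b]$, all mixed terms cancel and only $p_ap_b\bigl(E[X^aX^b]-\mu_a\mu_b\bigr)=p_ap_b\,Cov(X^a,X^b)$ survives, which is the claim.

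The computation is short once the selectors are in place; the only real obstacle is bookkeeping the independence relations correctly. The decisive point is that $U^a,U^b$ are fresh draws independent of everything else, so every term involving at least one replacement factorizes completely and is therefore matched identically in $E[Y^aY^b]$ and in $E[Y^a]E[Y^b]$. This is exactly why only the ``both-kept'' contribution $B_aB_bX^aX^b$, weighted by $p_ap_b$, can carry any genuine dependence between $X^a$ and $X^b$, and why the uniform means $\nu_a,\nu_b$ disappear from the final covariance even though they appear in $E[Y^a]$ and $E[Y^b]$ individually.
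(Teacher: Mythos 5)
Your proof is correct and takes essentially the same route as the paper's: both expand $E(Y^aY^b)$ and $E(Y^a)E(Y^b)$ into the same four terms and cancel the three mixed ones using the independence of $U^a,U^b$ from each other and from $(X^a,X^b)$. Your Bernoulli selectors $B_a,B_b$ are just a more formal rendering of the paper's casewise decomposition of $Y^aY^b$ (with the bonus of making explicit the independence of the two randomization decisions, which the paper leaves implicit in the phrase ``independently randomized''), so no substantive difference remains.
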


\begin{proof}
	The covariance $Cov(Y^a,Y^b)$ can be expressed as 
	\begin{equation}
	\label{covar}
	Cov(Y^a,Y^b)=E(Y^a Y^b)-E(Y^a)E(Y^b).
	\end{equation}
	
	Let us start by computing $E(Y^a)E(Y^b)$. 
	The expected values of $Y^a$ and $Y^b$ are
	$ E(Y^a)=p_a  E(X^a) + (1-p_a) E(U^a)$
	and 
	$E(Y^b)=p_b  E(X^b) + (1-p_b) E(U^b)$.
	Their product is
	\begin{align}
	E(Y^a)E(Y^b) &= p_a p_bE(X^a)E(X^b) \nonumber\\
	&+ p_a(1-p_b)E(X^a)E(U^b) \nonumber\\
	&+ (1-p_a)p_bE(U^a)E(X^b) \nonumber\\
& +(1-p_a)(1-p_b)E(U^a)E(U^b).
	\label{primera}
	\end{align}
	
	Let us now compute $E(Y^a Y^b)$. The random variable $Y^a Y^b$ can be
	expressed as:
	\[
	Y^a Y^b=\begin{cases}
	X^a X^b & \text{with probability }p_{a}p_{b};\\
	X^a U^b & \text{with probability }p_{a}(1-p_{b});\\
	U^a X^b & \text{with probability }(1-p_{a})p_{b};\\
	U^a U^b & \text{otherwise.}
	\end{cases}
	\]

	Hence, $E(Y^a Y^b) = p_a p_b E(X^a X^b) + p_a(1-p_b)E(X^a U^b) +
	(1-p_a)p_bE(U^a X^b) + (1-p_a)(1-p_b)E(U^a U^b)$. 
Since $U^a$ and $U^b$ are
	independent from each other and also are independent from $X^a$ 
and $X^b$, we can
	write the expected values as products:
	
	\begin{align}
	E(Y^a Y^b) &= p_a p_b E(X^a X^b) + p_a(1-p_b)E(X^a)E(U^b) \nonumber\\
	& +(1-p_a)p_bE(U^a)E(X^b) \nonumber\\
        &  + (1-p_a)(1-p_b)E(U^a)E(U^b).
	\label{segona}	
	\end{align}
	
	By plugging Expressions (\ref{primera}) and (\ref{segona})
	into Expression (\ref{covar}) and simplifying, we can express the 
	covariance as:
	\begin{align}
	Cov(Y^a,Y^b) &=p_a p_b (E(X^a X^b)-E(X^a)E(X^b))\nonumber\\
	&=p_a p_b Cov(X^a,X^b).\nonumber
	\end{align}
\end{proof}

\begin{corollary}
	\label{cor:cov}
	Let ${\bf X}$ be a data set with attributes $X^1,\ldots,X^m$,
	that are randomized into $Y^1,\ldots,Y^m$ as follows: 
	\[
	Y^j=\begin{cases}
	X^j & \text{with probability }p,\\
	U^j & \text{otherwise,}
	\end{cases}
	\]
	where $U^j$ is uniformly distributed over the support of $X^j$ 
	and $p\in (0,1]$.
	Then the randomization does not alter the relative strength of the covariance between
	attributes, that is, if $Cov(X^j,X^k)>Cov(X^l,X^m)$ then  $Cov(Y^j,Y^k)>Cov(Y^l,Y^m)$.
\end{corollary}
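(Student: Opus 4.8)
The plan is to reduce the corollary directly to Proposition~\ref{prop:cov}. The randomization scheme described in the corollary is precisely the one analyzed in that proposition, with the single simplification that every attribute $X^j$ is randomized using the \emph{same} retention probability $p$; in the language of the proposition, $p_a$ and $p_b$ both equal $p$ no matter which pair we examine. So the first step is to instantiate Proposition~\ref{prop:cov} on the pair $(X^j,X^k)$, taking $p_a=p_b=p$, which gives $Cov(Y^j,Y^k)=p^2\,Cov(X^j,X^k)$. Since all $m$ attributes are randomized independently, any pair among them satisfies the independence hypothesis of the proposition, so this invocation is legitimate and immediate.

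The second step is to apply the same instantiation to the pair $(X^l,X^m)$, once more with $p_a=p_b=p$, yielding $Cov(Y^l,Y^m)=p^2\,Cov(X^l,X^m)$. The point to record here is that the multiplicative factor $p^2$ is \emph{identical} for both pairs, and this is a consequence of the corollary using one common probability $p$ across all attributes rather than attribute-dependent probabilities. This shared factor is exactly what allows the covariance ordering to be transported from $\mathbf{X}$ to $\mathbf{Y}$.

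The final step uses $p\in(0,1]$, which forces $p^2>0$. Multiplying the hypothesis $Cov(X^j,X^k)>Cov(X^l,X^m)$ by the positive constant $p^2$ preserves the strict inequality, so $Cov(Y^j,Y^k)=p^2\,Cov(X^j,X^k)>p^2\,Cov(X^l,X^m)=Cov(Y^l,Y^m)$, which is the asserted conclusion.

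I do not anticipate any genuine obstacle, since the entire substance is already carried by Proposition~\ref{prop:cov}; the only thing requiring care is making explicit that both covariances are scaled by the \emph{same} positive factor $p^2$. Indeed, had different attributes carried different retention probabilities, the two scaling factors would in general differ and the relative ordering of covariances could reverse, so the uniform-$p$ assumption is precisely what guarantees the ranking is preserved.
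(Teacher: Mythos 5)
Your proposal is correct and follows essentially the same route as the paper's own proof: instantiate Proposition~\ref{prop:cov} with $p_a=p_b=p$ for each pair, obtain the common positive factor $p^2$, and multiply the hypothesized inequality through by it. Your added observation that attribute-dependent retention probabilities could break the ordering is a sensible clarification but not needed for the argument.
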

\begin{proof}
	The proof immediately follows from Proposition~\ref{prop:cov}:
	\begin{align}
	Cov(Y^j,Y^k)&= p^2 Cov(X^j,X^k)>p^2Cov(X^l,X^m) \nonumber\\
	&=Cov(Y^l,Y^m). \nonumber
	\end{align}
\end{proof}

According to Proposition~\ref{prop:cov}, RR attenuates the covariance
between attributes, but Corollary~\ref{cor:cov} shows that 
it preserves the relative strength of the covariances between
pairs of attributes. 
Since when clustering attributes 
with Algorithm~\ref{alg:clusters} we are  interested in the relative strength of the 
dependence between attributes, it makes sense to run Algorithm~\ref{alg:clusters} using dependences between randomized attributes. 
Thus, attribute clusters can be obtained as follows:
\begin{enumerate}
\item Every party publishes the value of each of her attributes using
RR with the randomization of Corollary~\ref{cor:cov}.
\item Dependences are computed on the randomized attributes.
\item Algorithm~\ref{alg:clusters} computes attribute
clusters based on the dependences
between randomized attributes. 
\end{enumerate}

As to the communication cost of this method, 
at least one individual 
must gather the entire randomized data set in 
order to compute attribute dependences and thereby 
the attribute clustering, which is then
shared with the rest of individuals. Since the size
of the randomized data set is $O(nm)$, the communication
cost for the individual(s) computing the clustering is $O(nm)$. For 
the rest of individuals, the communication 
cost is only $O(m)$: they have to send 
their randomized record, then receive
the attribute clustering and finally return their
randomized record according to the received clustering.
Regarding the computational cost, the computations are fairly
simple for most individuals: they only need to randomize their records, 
which takes cost $O(m)$. For the 
individual(s) computing the dependences and the 
 clustering of the $m$ attributes there is an additional
cost that depends on the dependence measure and the clustering
algorithm used.


In the above proposition and corollary, 
we have focused on covariance to measure dependence.
Intuition tells us that the effect of randomization on 
other dependence measures, such as the ones mentioned
in Section~\ref{sec:RR-clusters} for use with Algorithm~\ref{alg:clusters},
can be expected to be similar: attenuation of dependence
but preservation of its relative strength. However, this need
not always be the case ---see Section~\ref{sub:sec_sum}
for a method that is agnostic of the dependence measure.

Once attribute clusters have been determined, 
parties can use RR-Joint within each cluster, which yields
the final randomized data set. 

Due to the fact that RR yields differential privacy (see Section~\ref{sec3}),
this method to compute the dependence between attributes yields differential
privacy.

\subsection{Exact bivariate distribution via secure sum}
\label{sub:sec_sum}

The method described in the previous section does not generalize to an 
arbitrary measure of dependence between attributes. 
In this section, we propose an alternative method that exploits the following two facts:

\begin{itemize}
	\item We can remove identifier attributes from ${\bf X}$ 
	without affecting its statistical utility;
	\item Bivariate distributions are enough to compute the dependence 
	between pairs of attributes.
\end{itemize}

Essentially, the alternative procedure consists of 
each party releasing her true values 
for each pair of attributes, which yields bivariate
distributions wherefrom attribute dependences are obtained. 
Note that, since 
parties release unmasked data, differential privacy
does not apply. In spite of that, the risk of disclosure is low.
Since ${\bf X}$ does not include identifiers, 
we are in one of the following situations:
\begin{itemize}
	\item If none of the attributes of a given pair is confidential, then there is no risk of disclosure.
	\item If there is one confidential attribute (or two) in the pair, 
	then intruders cannot re-identify the party to whom the 
	record corresponds because there is (at most) 
	one quasi-identifier attribute in the pair. 
Recall that a single quasi-identifier is
	not enough to re-identify a record (otherwise it would be 
	an identifier).
\end{itemize}

The fact that a pair of attribute values is not re-identifying 
is of little help if the sender of the pair can be traced. 
Thus, we need the following:
\begin{itemize}
	\item {\em Anonymous communication.} 
	The communication channel should be anonymous. Otherwise, 
	the identity of the party to whom a pair corresponds can be 
	established by an intruder.
	\item {\em Unlinkability of communications originated by a party.} 
	If an intruder can link several pairs of
	values originated by a party, 
	he can acquire the values of several quasi-identifiers for that party, 
	which may lead to her re-identification.
	Further, if confidential attributes are also among the 
	acquired attributes, confidential information on the re-identified
	party has been disclosed.  
\end{itemize} 

The above properties can be attained by using a secure sum protocol. 
To illustrate how this can work to obtain
the empirical distribution of two categorical attributes,
we give a simple secure sum protocol that instantiates 
the general framework of~\cite{BGW}.
Let $(a,a')$ be 
a possible combination of values of attributes $A$ and $A'$. 
To compute the absolute frequency of $(a,a')$,
$n$ parties proceed as follows:

\begin{enumerate}
	\item Each party $i$ chooses a set of $n$ 
	random numbers $r_{ij}$ such that 
	$\sum_{j=1,\ldots,n}r_{ij} \pmod {n+1}=0$, that is,
so that the sum of the chosen numbers is a multiple of $n+1$;
	\item Each party $i$ sends $r_{ij}$ to party $j$, for each $j$;
	\item Each party $j$ collects $r_{1j},\ldots,r_{nj}$, computes 
	$\sum_{i=1,\ldots,n}r_{ij}$, and broadcasts:
	\begin{itemize}
		\item $r_j=\sum_{i=1,\ldots,n}r_{ij} + 1$ if 
		attributes $A$ and $A'$ take the values $a$ and $a'$, 
respectively, for party $j$;
		\item $r_j=\sum_{i=1,\ldots,n}r_{ij}$, otherwise;
	\end{itemize} 
	\item Each party $i$ collects $r_j$ for $j=1,\ldots,n$ and computes the absolute frequency of $(a,a')$ as
	$\sum_{j=1,\ldots,n}r_j \pmod{n+1}$.
\end{enumerate}

Computing the frequency of every combination of values for each
pair of attributes
 increases the communication cost with respect to the method 
of Subsection~\ref{sub:rr_ind}. Taking into account that the cost of the secure-sum protocol is proportional to the number of individuals $n$ and that it 
must be run for each possible value of each pair of attributes, the communication cost is 
$O(\sum_{1\le i<j\le m}(|A^i||A^j|) \times n)$.
The computations needed to run the method in this subsection 
are fairly simple. For this reason, the overall cost 
is dominated by the communication cost.

\subsection{Randomized response on each pair of attributes}
\label{sub:rr_pair}

The procedure described in Section~\ref{sub:sec_sum} makes sure that 
only bivariate distributions are made available. 
Since a distribution does not report information 
on any specific party, its publication should in principle be safe. However,
small frequencies are problematic as they may enable linking 
two or more pairs of values and this may lead to re-identification
and disclosure.
For example, if upon seeing the bivariate distribution
of attribute $A^1$ and some other attribute $A^2$, an intruder 
learns there is a single party with $A^1=x$,
then he can link all pairs $(A^1=x,A^j)$ for $j \neq 2$ 
that have nonzero frequency
and thereby rebuild the party's entire
record, which may lead to the party's re-identification.

To prevent the above from happening, we can use RR on 
each pair of attributes, which
renders 
the computation differentially private (see Section~\ref{sec3}). 
The procedure is
as follows:

\begin{enumerate}
	\item Let $A^1,\ldots,A^m$ be the attributes of data set ${\bf X}$;
	\item Let ${\bf P}^{ij}$ be a  randomization matrix for the pair of attributes $(A^i,A^j)$, for $1 \leq i<j \leq m$;
	\item For every pair $(A^i,A^j)$ of attributes:
	\begin{enumerate}
		\item Each party uses ${\bf P}^{ij}$ to mask her value for 
		the pair via RR;
		\item The $n$ parties engage in the secure sum protocol described in Section~\ref{sub:sec_sum} to compute the distribution of the masked 
attribute pair; 
		\item Each party uses Equation (\ref{unbiased}) 
		to estimate the empirical distribution of the {\em unmasked} pair of attributes
		$(A^i,A^j)$ in ${\bf X}$.
	\end{enumerate}
\end{enumerate} 

Once the above procedure is complete, the dependence
between any two attributes $A^i$ and $A^j$ can
be assessed by all parties based on the 
estimated empirical distribution of $(A^i,A^j)$. 

Notice that, in spite of using RR, the above procedure
still resorts to 
the secure sum procedure: the purpose is to make 
each masked pair unlinkable to the party that originated it 
and also to make masked pairs corresponding to the same 
party unlinkable between them. 
Indeed, in the above procedure RR
is computed $m-1$ times on each attribute 
(once for each other attribute in ${\bf X}$); 
if an intruder was able to link a party's $m-1$
masked responses, the risk of disclosing the party's true 
responses and identity 
would increase significantly.

If the randomization matrix is chosen adequately,
this method to compute the dependence between attributes
can achieve differential privacy, as recalled 
in Section~\ref{sec3}. Strictly
speaking, since each attribute
is randomized and released $m-1$ times in the secure sum, 
sequential composition tells that the
overall level of differential privacy is the sum of the 
levels of each release.
However, the secure sum makes releases unlinkable, 
so an intruder cannot take advantage of the 
multiple releases to increase his knowledge about 
the value of an attribute. In this situation, 
we can waive sequential composition and
take the overall level of differential privacy to be same
as if each attribute were released only once:
the unlinkability property 
 closely matches the requirements of parallel composition.

The cost of the method in this subsection 
is easily computed by noticing that it is simply the
method described in Subsection~\ref{sub:sec_sum} with an additional randomization step.
As the cost of the randomization step is not significant with respect to the cost of
the secure sum, the cost of the method here equals that of Subsection~\ref{sub:sec_sum}:
$O(\sum_{1\le i<j\le m}(|A^i||A^j|) \times n)$.

\section{RR-Adjustment}
\label{sec:Adjustment}

To estimate the joint distribution in RR-Independent, we needed to 
assume that attributes were independent. We then proposed
 RR-Clusters to partially circumvent that need. We write
``partially'' because RR-Clusters still needs 
to  assume that attributes in different clusters are independent. 
The purpose of
the method described in this section, RR-Adjustment, is to
``repair'' some of the loss in estimation accuracy caused by 
these independence assumptions. The idea is to leverage 
the information about the relation between attributes
that remains in the randomized data set in order to obtain 
more accurate estimates of the joint distribution.

Our description of RR-Adjustment will be given in terms of attributes. 
However, the same description is valid if we substitute 
clusters of attributes for attributes. Indeed, 
a cluster of attributes can also be viewed as a special attribute 
obtained as the Cartesian product of the attributes in the cluster.
So {\em where we write ``attribute'',
``marginal distribution'' and ``RR-Independent'' in the rest of 
this section we could write  
``cluster of attributes'', 
``attribute cluster marginal distribution'' 
and ``RR-Clusters'', respectively.}

RR-Adjustment is based on the fact that, although attenuated,
the relation between attributes in ${\bf X}$ is likely to survive 
in ${\bf Y}$.
The greater the probability that randomization preserves the true 
values
({\em i.e.} the greater the probability mass in the diagonal
of the randomization matrix),
the better the relation between attributes is preserved in ${\bf Y}$.
For instance, when the RR matrix is the identity, ${\bf Y}$ contains the
same information as ${\bf X}$. In contrast, when randomization replaces 
each true value by a draw from a uniform distribution, no information remains in ${\bf Y}$.


RR-Adjustment is formalized 
in Algorithm~\ref{alg:RR-adj}. The algorithm 
generates the estimate of the joint distribution 
of ${\bf X}$ by assigning weights (probabilities) to the records in ${\bf Y}$ in such a way that marginal
distributions coincide with the estimates  
obtained from RR-Independent. 
An iterative approach is used in this task. 
Initially, each point of ${\bf Y}$ is assigned 
 weight $1/n$. Then the algorithm loops through each attribute $A^j$ 
and adjusts the current weights
so that the marginal distribution of $A^j$ 
coincides with its RR-independent estimate. 
As the weight adjustment 
for $A_j$ is likely
to break the adjustments previously done,  
 the previous process needs to be iterated until 
it converges to a stable set of weights. If 
 strict convergence is desired,
iteration must carry on until weights do not change any more. However, 
less strict termination conditions such as
using a threshold on the weight changes or simply running a 
small fixed number of iterations 
are also valid. Recall that the relations 
between attributes in ${\bf Y}$   
have been attenuated by randomization; therefore, regardless of the 
convergence or termination criterion
used, the 
relations between attributes in ${\bf X}$ will be recovered only
approximately.

\begin{algorithm}
	\caption{RR-Adjustment. Adjustment of the randomized data set ${\bf Y}$ to match the estimated distributions
		of each attribute}
	\label{alg:RR-adj}
\begin{enumerate}
	\item Let ${\bf Y}=\{(y_i^1,\ldots,y_i^m)\}_{i=1,\ldots,n}$ be the randomized data set
	\item Let $\hat{\pi}^j$ be the estimated marginal distribution 
for attribute $X^j$
	\vspace{0.3cm}
	\item Let ${\bf w}=(w_1, \ldots, w_n)$ be a vector
of weights for records in ${\bf Y}$
	\item Initialize $w_i=1/n$ for $i=1, \ldots, n$
	\item Repeat \{
	\item \hspace{0.3cm} For $j=1,\ldots,m$
	\item \hspace{0.6cm}    ${\bf Y'}$= Adjust\_weights (${\bf Y}$, 
${\bf w}$, $j$, $\hat{\pi}^j$)
	\item \} Until ($convergence\_or\_termination\_condition$)
 	\item Return ${\bf Y'}$
 	
 	\vspace{0.5cm}
 	\item Adjust\_weights(${\bf Y}$, ${\bf w}$, $j$, $\hat{\pi}^j$)
 	\item \hspace{0.3cm} For $k=1,\ldots,|X^j|$
 	\item \hspace{0.6cm}    Let $s_k= \sum_{y^j_i=k} w_i$ \\
//$s_k$ is the sum of weights of records in ${\bf Y}$ with $j$-th attribute
equal to $k$
 	\item \hspace{0.3cm} End for    
 	\item \hspace{0.3cm} For $i=1,\ldots,n$
	\item \hspace{0.6cm} 	$v= y^j_i$ 	
	\item \hspace{0.6cm}    Set $w_i = w_i \times \frac{\hat{\pi}^j_v}{s_v}$
	\item \hspace{0.3cm} End for
	\item \hspace{0.3cm} Return ${\bf Y}$, ${\bf w}$
	
\end{enumerate}
\end{algorithm}

Regarding privacy,
RR-Adjustment transforms the randomized data ${\bf Y}$ without making
use of the true data ${\bf X}$. Thus, RR-Adjustment does not
increase the risk of disclosure with respect to  ${\bf Y}$.

As to the computational cost, $Adjust\_weights$ is called $m$ times for each iteration 
of the algorithm. Since each iteration of $Adjust\_weights$ has cost $\mathcal{O}(n)$, the total
cost of the algorithm is $\mathcal{O}(n\times m \times iter)$, where $iter$ is the number of iterations (this number depends on the termination criterion used).  

We next give a numerical toy example to 
illustrate the operation of Algorithm~\ref{alg:RR-adj}.

\begin{example}
{\em Consider a randomized data set ${\bf Y}$ 
obtained by using RR-Independent 
on $n=10$ parties with two attributes,
so that the first attribute can take values 
$a^1_1$ and $a^1_2$ and the second attribute
can take values $a^2_1$ and $a^2_2$.
The empirical joint distribution of ${\bf Y}$ is
as follows: 
\begin{eqnarray} 
&&\mbox{$(a^1_1, a^2_1)$ appears in the first 4 records,} \nonumber \\
&&\mbox{$(a^1_2,a^2_1)$ appears in the next 2 records,} \nonumber \\
&&\mbox{$(a^1_1,a^2_2)$ appears in 0 records and} \nonumber\\ 
&&\mbox{$(a^1_2,a^2_2)$ appears in the last 4 records.} \label{empiric}
\end{eqnarray}
This yields marginal distributions 
$\hat{\lambda}^1=(4/10, 6/10)$ and $\hat{\lambda}^2=(6/10,4/10)$
for the
two attributes $Y^1$, $Y^2$ in the randomized data set ${\bf Y}$.
Assume that after using 
Expression (\ref{unbiased}) independently on
each attribute, we obtain estimated
marginal distributions 
$\hat{\pi}^1=(1/2,1/2)$ and $\hat{\pi}^2=(1/2,1/2)$ for the two attributes
$X^1$, $X^2$ in the true data set ${\bf X}$.
Algorithm~\ref{alg:RR-adj} assigns an initial weight $w_i=1/10$ to each 
record for $i=1,\ldots,10$; 
adding these weights for records sharing the same value of $Y^j$ 
yields the marginal distribution of attribute $Y^j$. 
Then the algorithm adjusts the weights of records in order 
to make the marginal distributions of $Y^1$ and $Y^2$ as 
close as possible to the estimated marginal distributions
of $X^1$ and $X^2$:
\begin{itemize}
\item For the first attribute, the 
{\tt Adjust\_weights} routine in Algorithm~\ref{alg:RR-adj} 
changes $\hat{\lambda}^1$
from $(4/10, 6/10)$
into $(1/2, 1/2)$ (which is the value of $\hat{\pi}^1=(1/2,1/2)$.
To do this, the procedure computes $(s_1,s_2)=(4/10, 6/10)$.
Then for the first record, it sets $v=y^1_1=a^1_1$ and 
\[w_1=1/10 \times \frac{1/2}{4/10} = 1/8.\]
Similarly, for the second to fourth records $v=a^1_1$ and
$w_2=w_3=w_4=1/8$. 
Then for the fifth to tenth records $v=a^1_2$ and 
$w_5=w_6=w_7=w_8=w_9=w_{10}=1/12$.
After these changes, we have $\hat{\lambda}^1=(1/2,1/2)$ 
and, as a side effect, we also have an updated $\hat{\lambda}^2=(2/3,1/3)$
\item For the second attribute, {\tt Adjust\_weights} changes 
$\hat{\lambda}^2$
from $(2/3, 1/3)$
into $(1/2, 1/2)$ (which is the value of $\hat{\pi}^1=(1/2,1/2)$).
This is done in a way analogous to what was done for the first
attribute. As a side effect, this will change again $\hat{\lambda}^1$,
that will no longer be $(1/2,1/2)$. 
\end{itemize}
We see that changes in the distribution of one attribute result
in changes of the distribution of the other attribute. This is why
{\tt Adjust\_weights} must be iterated to try to bring the 
empirical distributions $\hat{\lambda}^1$, $\hat{\lambda}^2$ 
close to the estimated distributions $\hat{\pi}^1=(1/2,1/2)$, 
$\hat{\pi}^2=(1/2,1/2)$. In this example, this can be achieved
 because the joint empirical distribution converges towards
the first 4 records having weight $1/8$, the next 2 having weight 0,
and the last 4 records having weight $1/8$. 
Thus, RR-Adjust yields the following joint empirical distribution for the 
four combinations of categories:
\begin{eqnarray}
\label{dist1}
\Pr(a^1_1,a^2_1)=1/2; \Pr(a^1_1,a^2_2)=0;\nonumber\\ 
\Pr(a^1_2,a^2_1)=0; \Pr(a^1_2,a^2_2)=1/2.
\end{eqnarray}
In contrast, estimating the joint empirical distribution using 
just RR-Independent would yield:
\begin{eqnarray}
\label{dist2}
\Pr(a^1_1,a^2_1)=1/4; \Pr(a^1_1,a^2_2)=1/4;\nonumber \\
\Pr(a^1_2,a^2_1)=1/4; \Pr(a^1_2,a^2_2)=1/4.
\end{eqnarray}
In both cases, we have the same marginal distributions 
$\hat{\lambda}^1=(1/2,1/2)$ and $\hat{\lambda}^2=(1/2,1/2)$.
However, Distribution (\ref{dist1}) yielded by RR-Adjust 
is more similar than Distribution (\ref{dist2}) 
to the empirical distribution of 
the randomized data set ${\bf Y}$ (Expression (\ref{empiric})). 
Hence, Distribution (\ref{dist1}) is a more plausible 
estimation of the joint distribution of ${\bf X}$.}
\end{example}

\section{Experimental Results}
\label{sec:results}

\subsection{Dataset}
We based our experiments on the Adult dataset. This is a data set with over 
32,500 records and a combination of numerical and categorical attributes.
We assumed that each record was held by a different individual who
wanted to anonymize it locally with RR. 
For the experiments, we only took categorical attributes into account. These attributes were: Work-class (with 9 categories), Education (with 16 categories), Marital-status (with 7 categories), Occupation (with 15 categories), Relationship (with 6 categories), Race (with 5 categories), Sex (with 2 categories), and Income (with 2 categories). 

\subsection{Evaluated methods}
\label{metodes}

In the test dataset there were 1,814,400 possible combinations of attribute values. Such a large number 
made RR-Joint on the Cartesian product of all attributes 
computationally unfeasible. Even if RR-Joint had been 
computationally affordable, the estimated distribution 
would have been extremely inaccurate, because the number of categories
in the Cartesian product was much larger than the number of individual records 
(see Section~\ref{sub:accuracy}).

Discarding RR-Joint on all attributes for the above reasons 
left us with trying RR-Independent, RR-Clusters 
and RR-Adjustment to estimate the joint 
distribution of ${\bf X}$. We ran them as explained next:
\begin{enumerate}
	\item RR-Independent. We took as the baseline
for our experiments this method that performs 
RR independently for each attribute.
	\item RR-Clusters. We used RR-Clusters in an attempt to improve
	the estimation of the joint distribution with 
respect to RR-Independent. 
RR-Clusters was evaluated for different thresholds on the maximum number of
	category combinations in each cluster 
and for different thresholds on 
the minimum dependence required for attributes to be 
in the same cluster.
	\item RR-Independent + Adjustment. We leveraged 
the method described in Section~\ref{sec:Adjustment}
	to improve the distribution estimated with RR-Independent.
	\item RR-Clusters + Adjustment. We used 
the method of Section~\ref{sec:Adjustment} to improve 
the distribution estimated with RR-Clusters. 
\end{enumerate}

\subsection{Construction of the RR matrix}
\label{sub:rand_mat}

To make results across methods comparable, 
we evaluated the accuracy of the frequency estimates
at an equivalent level of risk. We used 
differential privacy as the risk measure.

First of all, we describe how we generated the RR matrix 
for RR-Independent. After that,
we describe how we generated an RR matrix 
for RR-Clusters that yielded an equivalent level of
differential privacy.

\subsubsection{RR matrix for RR-Independent}

Having selected differential privacy as the measure of risk, 
we wanted an RR matrix that was optimal
with respect to it. That is, we wanted an RR matrix that had 
the minimum level of randomization for
a given level of differential privacy. For an attribute $A$, such an RR matrix has the following form:
\begin{itemize}
	\item $p$ in the main diagonal;
	\item $(1-p)/|A|$ outside the main diagonal.
\end{itemize}

According to Expression (\ref{eqdp}), 
the level of differential privacy that such a matrix 
provided for attribute $A$ was 
\[\epsilon_A = \left|\ln\left(\frac{p}{(1-p)/|A|}\right)\right|.
\]

\subsubsection{RR matrix for RR-Clusters}

RR-Clusters identifies clusters of attributes and applies 
RR-Joint {\em within} each cluster. Let $C$ be a cluster of attributes.
For the risk of RR-Clusters and RR-Independent to be equivalent, 
we needed RR-Clusters to yield
$\sum_{A\in C}\epsilon_A$-differential privacy on cluster $C$, where $\epsilon_A$ is the
level of differential privacy that RR-Independent yields for $A$.

We aimed for an RR matrix for cluster $C$ that is optimal at providing
$\sum_{A\in C}\epsilon_A$-differential privacy. Such a matrix had the form:
\begin{itemize}
	\item $p_C$ in the main diagonal;
	\item $p_C \exp {(-\sum_{A\in C}\epsilon_A)}$ outside the main diagonal.
\end{itemize}

To have a proper RR matrix, we needed 
the total probability mass in each row to be 1. This 
happened when
\[p_C = \frac{1}{1+(1-\Pi_{A\in C}|A|)\exp{(-\sum_{A\in C}\epsilon_A)}}.
\]

\subsection{Estimation of the true distribution}
\label{truedist}

The estimation returned by Equation (\ref{unbiased}) need not be a proper probability distribution.
In our experiments, we selected as an estimate 
of the true distribution the proper probability
distribution closest (according to the Euclidean distance) to 
the output
of Equation (\ref{unbiased}). 
Such distribution was found by applying the following procedure
to the output of Equation (\ref{unbiased}):
\begin{itemize}
	\item Replace any negative values by 0;
	\item Rescale the rest of values so that their sum is 1.
\end{itemize}

\subsection{Evaluation results}

In line with the measures used in Section~\ref{sub:accuracy}, we measured 
the accuracy
of the estimated distribution of the true data 
as the absolute and the relative error in count queries. 

Let ${\bf S}$ be a subset of the data domain. Let $X_{\bf S}$ be
the true number of records of ${\bf X}$ that belong to ${\bf S}$. 
Let $Y_{\bf S}$ be the
number of records in ${\bf S}$ estimated from the randomized 
data set ${\bf Y}$.
The absolute error is
\[e_{\bf S} = |Y_{\bf S} - X_{\bf S}|
\]
and the relative error is
\begin{equation}
\label{relatiu}
r_{\bf S} = \left|\frac{Y_{\bf S}-X_{\bf S}}{X_{\bf S}}\right|.
\end{equation}

In all subsequent experiments, the values reported for $e_{\bf S}$ and 
$r_{\bf S}$ are median values over 1000 runs.

The subsets ${\bf S}$ of the data domain that we used in this evaluation 
were generated as follows:
\begin{itemize}
\item We chose a proportion $\sigma$ of the data domain 
 that we wanted ${\bf S}$ to cover.
	\item We took two random attributes of Adult to define 
 ${\bf S}$ (the results with ${\bf S}$ configured by a higher number of attributes
did not differ significantly).
	\item We randomly chose the subset ${\bf S}$ to contain 
a $\sigma$ proportion of all the possible combinations of values
of the previously selected two attributes.
\end{itemize}

First, we measured the accuracy of the baseline method, RR-Independent. 
We also measured the accuracy of the empirical 
distribution of the raw randomized data ${\bf Y}$
obtained with RR-Independent ({\em without} using Expression (\ref{unbiased})
to estimate the true distribution); we called this distribution
``Randomized''.

Figure~\ref{fig:base} shows the absolute and
the relative error in counts for $p=0.7$, as a function 
of the coverage $\sigma$, both for RR-Independent
and for Randomized. 
We observe that, thanks to 
Expression (\ref{unbiased}), RR-Independent
significantly reduced the absolute and the relative errors 
with respect to Randomized. 
The absolute error peaked at $\sigma=0.5$: the more combinations
in ${\bf S}$, the larger the absolute error could be. For larger
$\sigma$, the absolute error decreased: 
for example, for $\sigma=0.6$ the error was the same as for $\sigma=0.4$
(the absolute error of taking 60\% of categories is the same 
as the absolute error associated with the remaining 40\% of 
categories that are not covered). 
Regarding the relative error, it decreased as $\sigma$ grew,
because $X_{\bf S}$ in the denominator of Expression (\ref{relatiu}) was 
larger. 

\begin{figure}
	\begin{centering}
		\includegraphics[bb=50bp 225bp 540bp 580bp,clip,width=0.5\columnwidth]{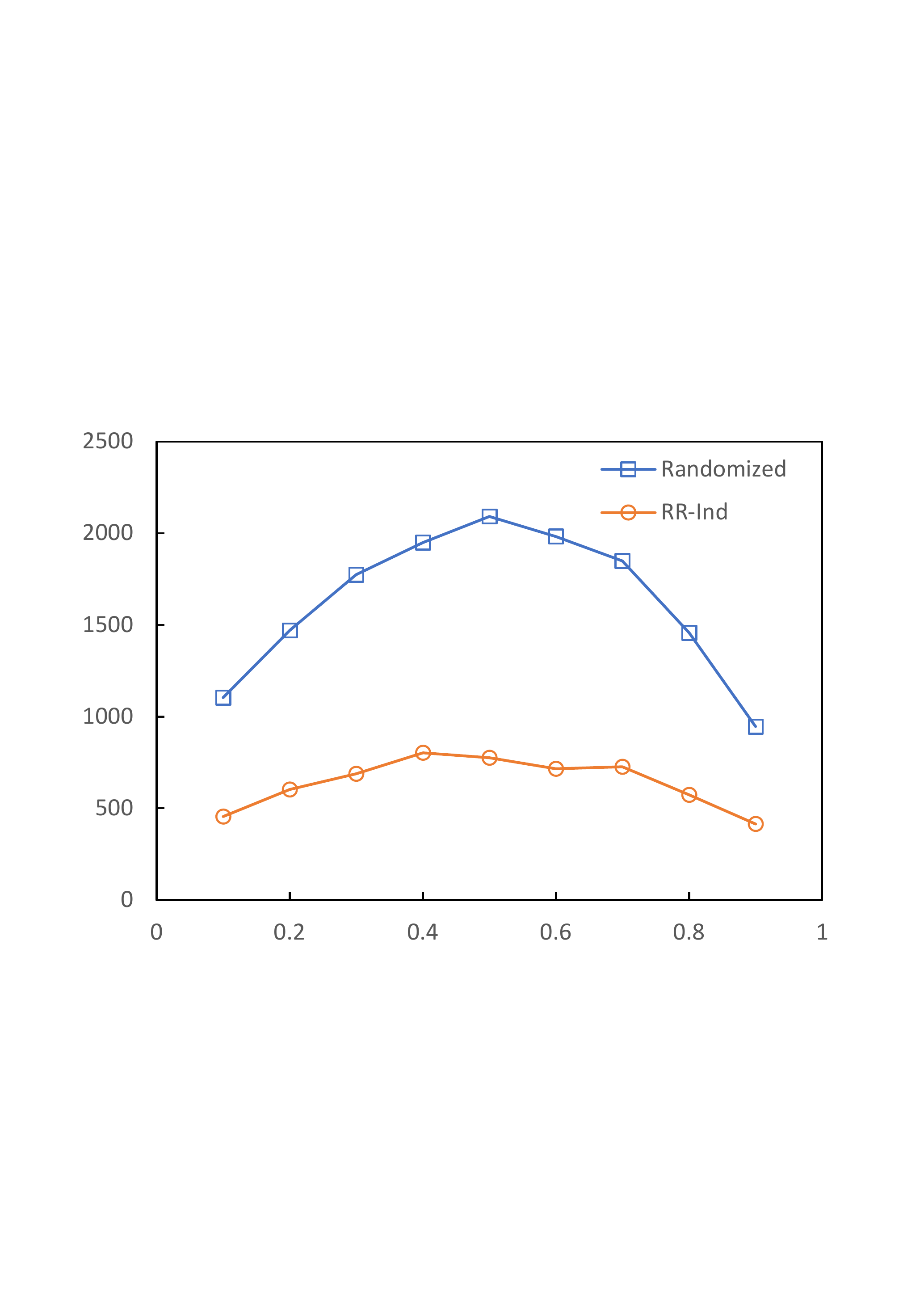}\includegraphics[bb=50bp 225bp 540bp 580bp,clip,width=0.5\columnwidth]{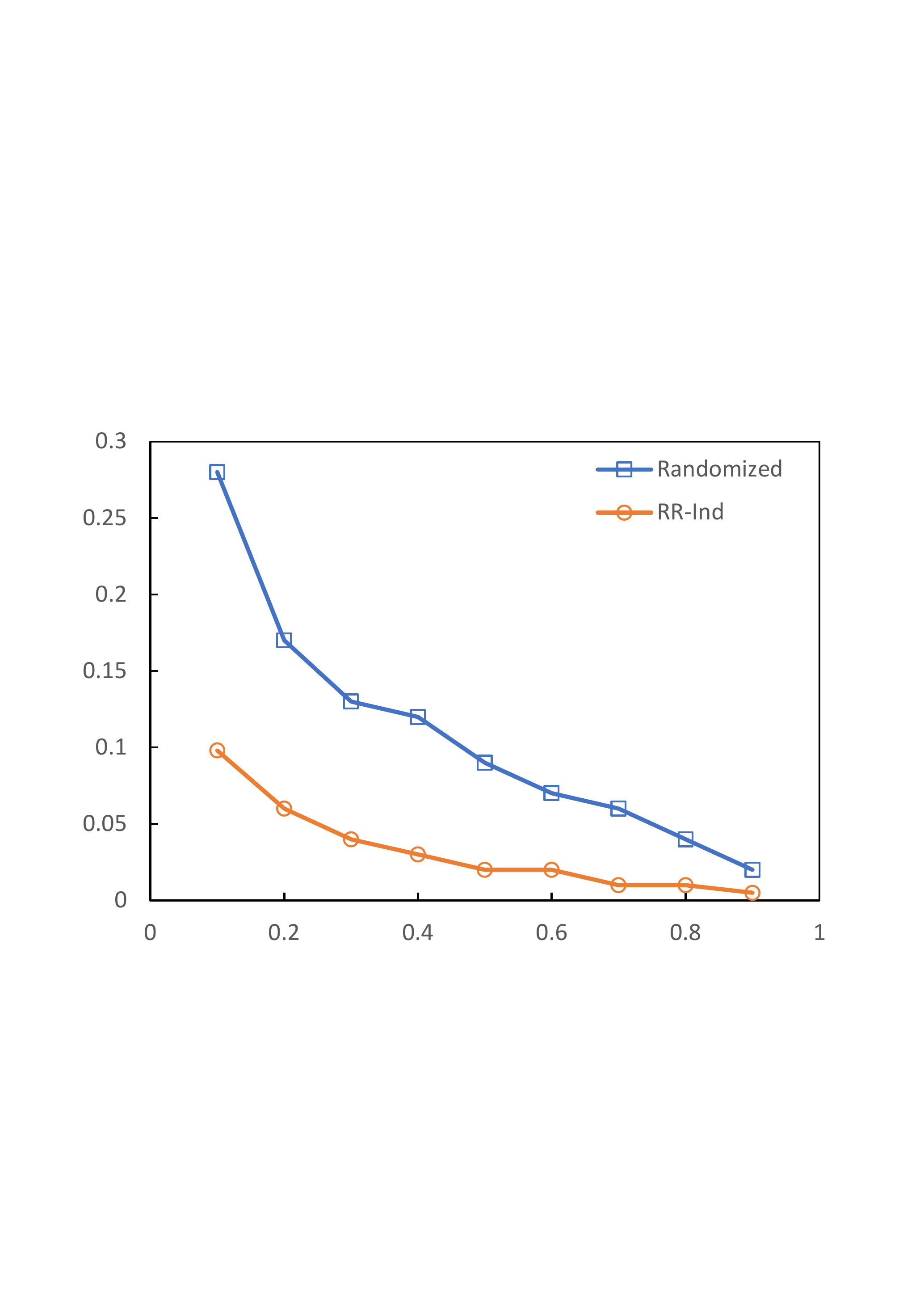}
		\par\end{centering}
	\caption{Absolute error (left) and relative error (right) of Randomized
		and RR-Independent when $p=0.7$. The $x$-axis represents 
the domain 
coverage $\sigma$.\label{fig:base}}
\end{figure}

The accuracy of estimations reported by RR-Clusters depends on 
the thresholds $T_v$ (maximum number of category 
combinations per cluster) and $T_d$ (minimum required dependence
between attributes in a cluster), 
the actual data set and the randomization matrix used. 
Next, we analyzed the behavior of RR-Clusters for the Adult data set 
and for randomization matrices generated as per Section~\ref{sub:rand_mat}.
In this evaluation, ${\bf S}$ was generated with $\sigma=0.1$.
Table~\ref{tab:threshold} shows the relative error of RR-Clusters for $T_v\in \{50,100,300\}$, $T_d\in \{0.1,0.2,0.3\}$ and a randomization matrix with $p\in \{0.1,0.3, 0.5, 0.7\}$. We observe that:
\begin{itemize}
\item As a rule, the relative error 
increased with $T_v$.  
This means that clusters with a high number of category combinations 
 had a clear negative effect on the estimation accuracy.
\item Regarding $T_d$, for small $p$ taking larger $T_d$ 
yielded better accuracy, whereas for larger $p$ taking smaller 
$T_d$ yielded better accuracy.  
Note that $T_d=0$ means that attributes can be clustered regardless of
their dependence, whereas $T_d=1$ 
means that attributes are never clustered (in this case we have
RR-Independent). 
Thus, clustering attributes turned out to be 
more rewarding for larger $p$, whereas
for small $p$, there was less incentive for clustering and the advantage
of RR-Clusters on RR-Independent was less noticeable.
\end{itemize}

\begin{table}
	
	\caption{Relative error of RR-Clusters in the Adult data set for $T_v\in \{50,100,300\}$, $T_d\in \{0.1,0.2,0.3\}$ and a randomization matrix with $p\in \{0.1,0.3, 0.5, 0.7\}$ }
	\label{tab:threshold}
	\begin{centering}
			\begin{tabular}{cc|ccc}
				\hline 
				    &         &    &$T_{v}$&\tabularnewline
				$p$ & $T_{d}$ & 50 & 100 & 300\tabularnewline
				\hline 
				0.1 & 0.1 & 0.335 & 0.404 & 0.495\tabularnewline
				0.1 & 0.2 & 0.357 & 0.351 & 0.501\tabularnewline
				0.1 & 0.3 & 0.285 & 0.426 & 0.505\tabularnewline
				\hline 
				0.3 & 0.1 & 0.335 & 0.334 & 0.426\tabularnewline
				0.3 & 0.2 & 0.262 & 0.310 & 0.435\tabularnewline
				0.3 & 0.3 & 0.199 & 0.306 & 0.445\tabularnewline
				\hline 
				0.5 & 0.1 & 0.094 & 0.148 & 0.214\tabularnewline
				0.5 & 0.2 & 0.107 & 0.127 & 0.236\tabularnewline
				0.5 & 0.3 & 0.116 & 0.119 & 0.212\tabularnewline
				\hline 
				0.7 & 0.1 & 0.069 & 0.069 & 0.074\tabularnewline
				0.7 & 0.2 & 0.070 & 0.075 & 0.071\tabularnewline
				0.7 & 0.3 & 0.070 & 0.068 & 0.079\tabularnewline
				\hline 
			\end{tabular}
		\par\end{centering}
\end{table}

Next, we compared the accuracy of the methods listed 
in Section~\ref{metodes} for different levels of
randomization ($p\in \{0.1, 0.3, 0.5, 0.7\}$) and for different coverages 
of ${\bf S}$ ($\sigma \in \{0.1, 0.2, 0.3, 0.4,$ $0.5, 0.6, 0.7,$ $0.8, 0.9\})$.
Figure~\ref{fig:rel_error} shows the relative error in the count queries. 
For each $p$, we took 
the best values for $T_v$ and $T_d$ identified in 
Table~\ref{tab:threshold}.  
We observe that:
\begin{itemize}
\item For small values of $p$ (that is, for
$p=0.1$ and $p=0.3$), RR-Independent yielded the best accuracy. 
For these values, using RR-Clusters
and RR-Adjustment was counter-productive. 
\item For larger values of $p$ (that is, for $p=0.5$
and $p=0.7$) and large coverages ($\sigma \geq 0.3$)
all methods behaved similarly and offered a very small relative error.
\item For larger $p$ and {\em small} coverages ($\sigma < 0.3$), 
RR-Clusters offered much more accuracy than RR-Independent. Furthermore,
using RR-Adjustment brought a substantial accuracy
improvement, no matter whether plugged after RR-In\-de\-pend\-ent or RR-Clusters.
\end{itemize}

In summary, for strong randomization, the dependences 
between attributes in ${\bf X}$ are mostly lost in ${\bf Y}$.
For this reason, in this case RR-Independent is as good
as RR-Clusters, and RR-Adjustment does not bring much. 
In contrast, for weak randomization, it makes sense
to leverage whatever dependences might be preserved 
in ${\bf Y}$, and thus RR-Clusters and RR-Adjustment 
outperform RR-Independent. This superior behavior
is visible only for small coverages, because for large
coverages the denominator of Expression (\ref{relatiu}) is
so large that any method achieves a small relative error.
 

\begin{figure}
	\begin{centering}
		\includegraphics[bb=40bp 225bp 560bp 580bp,clip,width=0.5\columnwidth]{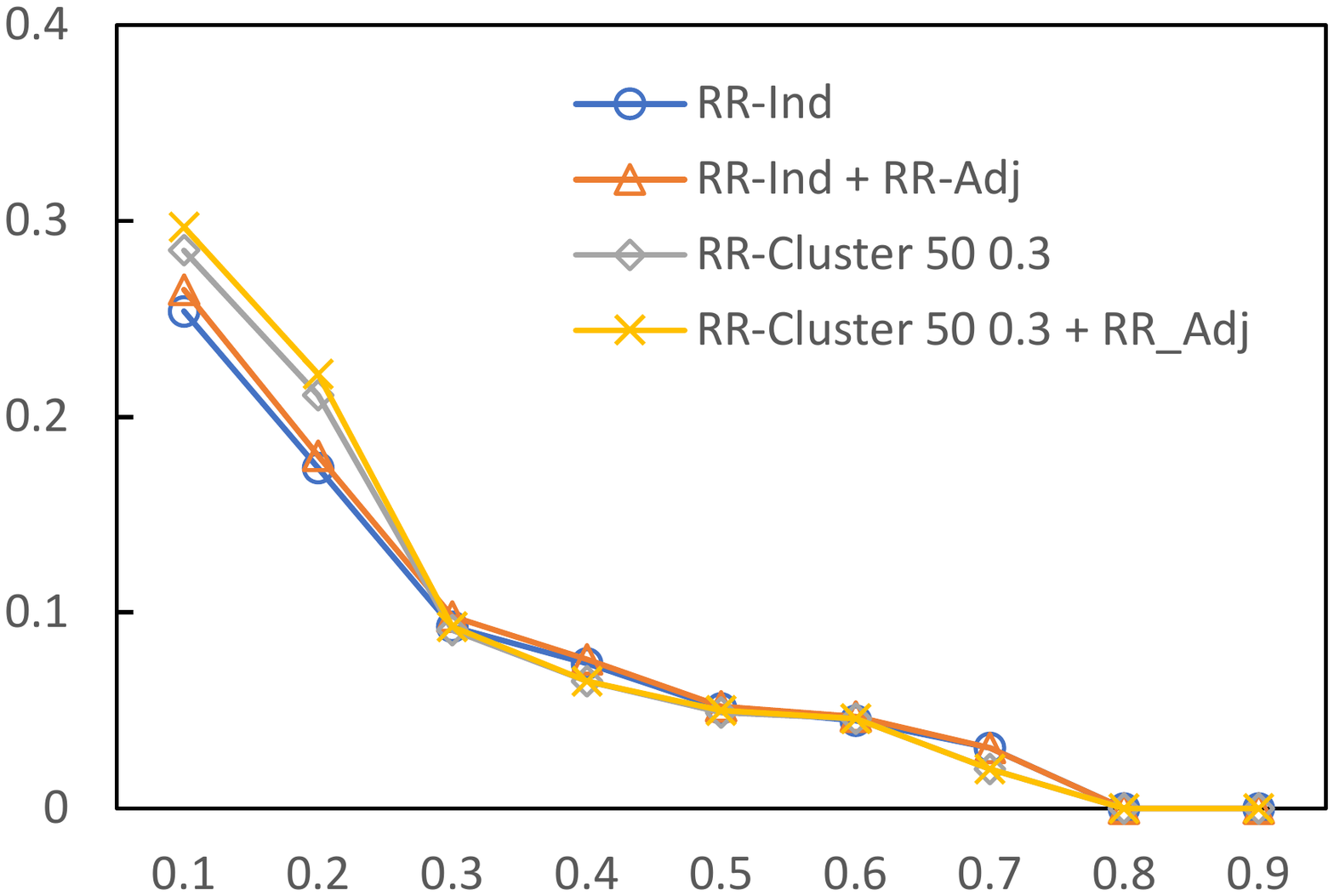}\includegraphics[bb=40bp 225bp 560bp 580bp,clip,width=0.5\columnwidth]{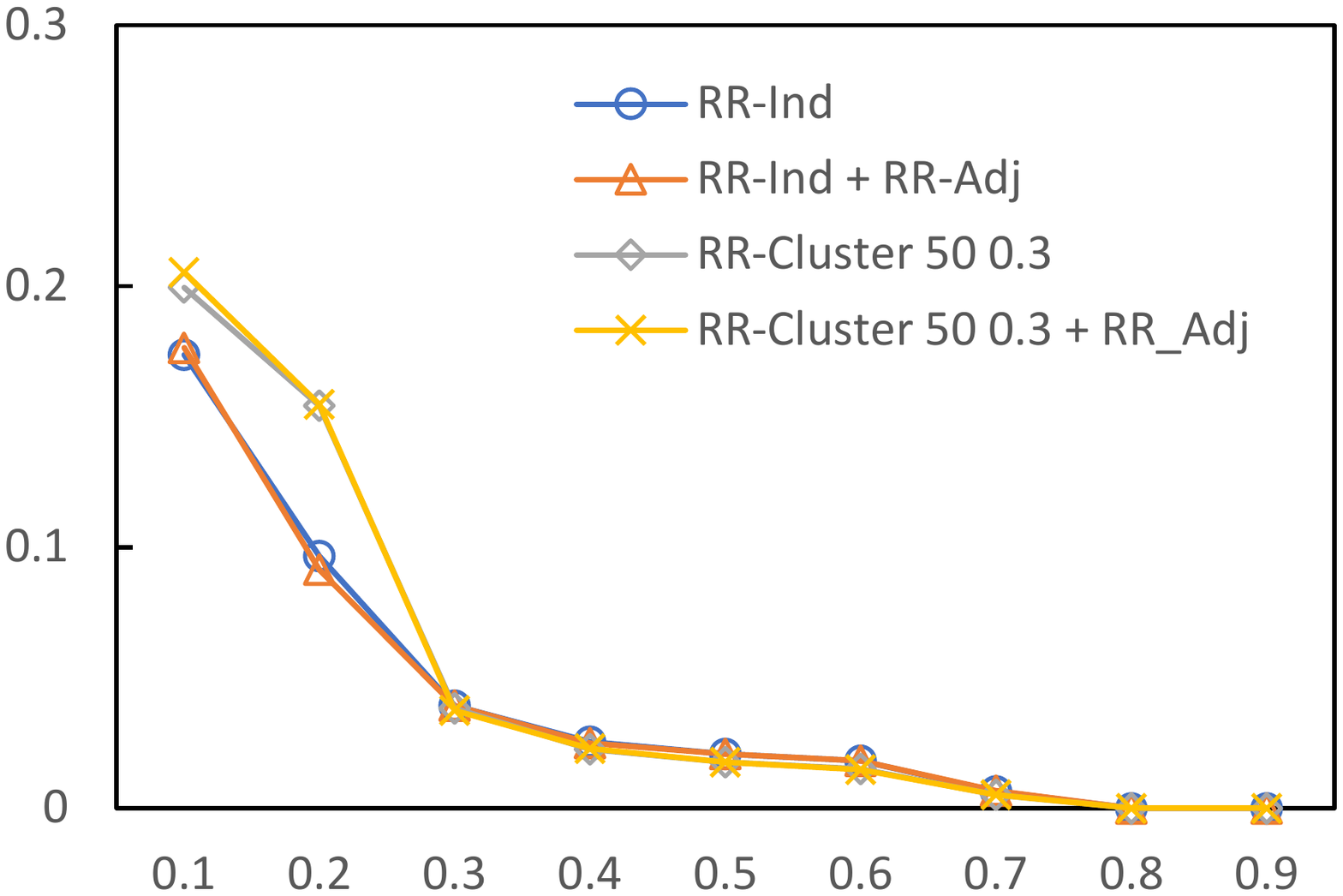}
		\par\end{centering}
	\begin{centering}
		\includegraphics[bb=40bp 225bp 560bp 580bp,clip,width=0.5\columnwidth]{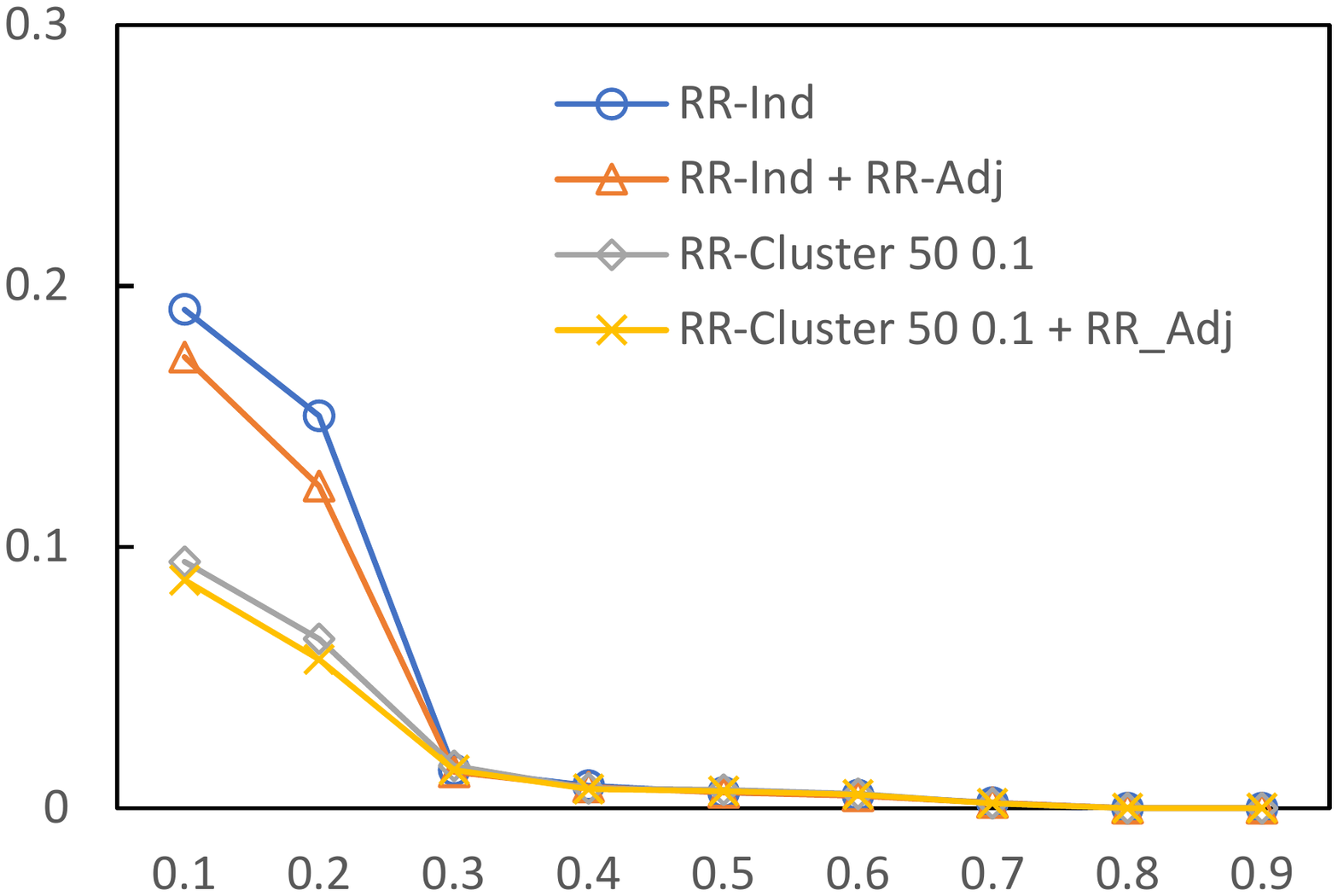}\includegraphics[bb=40bp 225bp 560bp 580bp,clip,width=0.5\columnwidth]{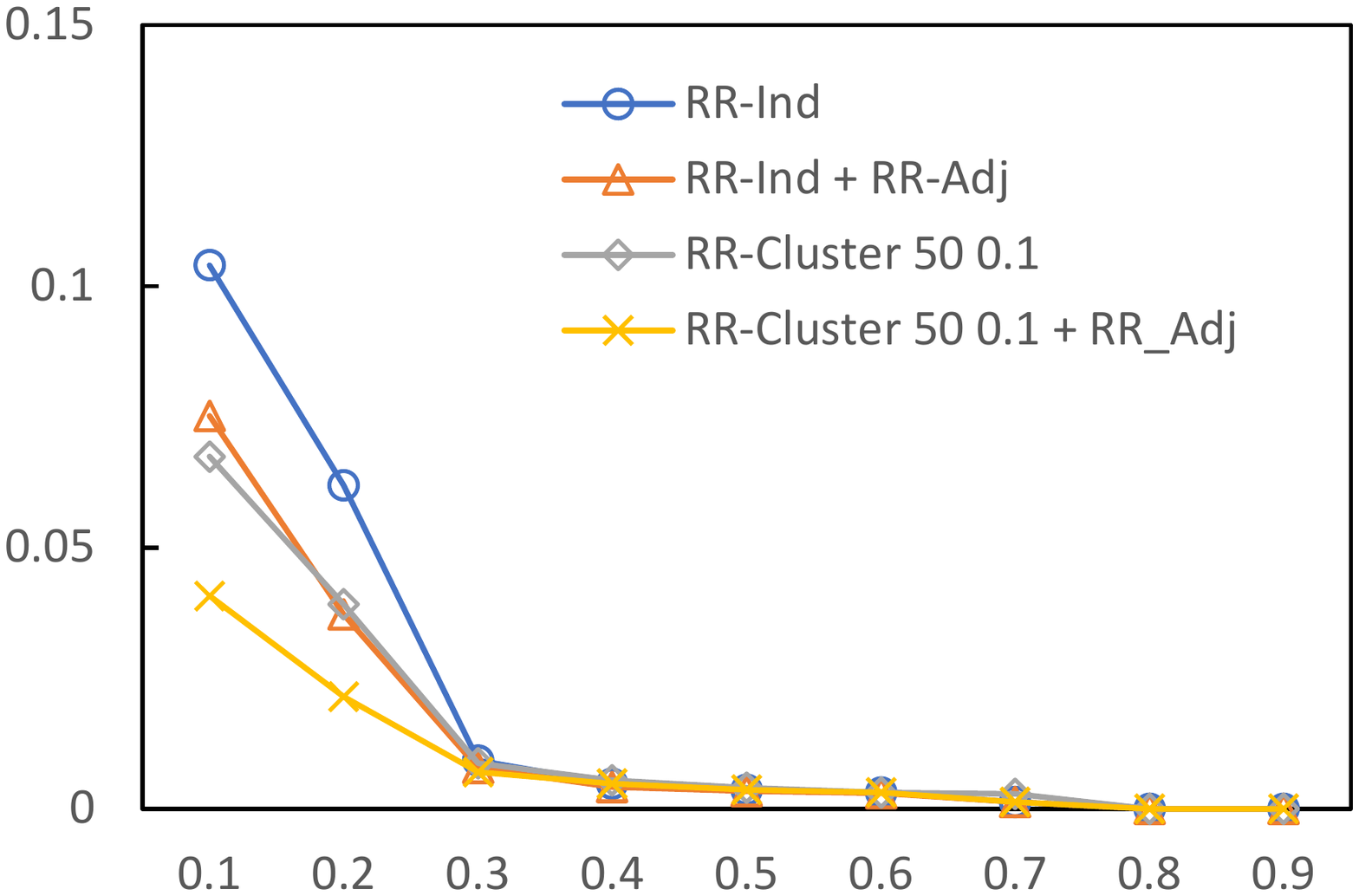}
		\par\end{centering}
	\caption{Relative error for different randomizations: 
 $p=0.1$ (top left),
		$p=0.3$ (top right), $p=0.5$ (bottom left) and 
		 $p=0.7$ (bottom right). 
The $x$-axis represents the domain coverage $\sigma$.
\label{fig:rel_error}}
	
\end{figure}

Finally, we analyzed the effect of the data set size 
on the accuracy of the estimates.
For the comparison with the previous results to be fair,  
we needed a data set with
the same distribution. We obtained an expanded data set Adult6 
by concatenating the original 
Adult data set 6 times. In this way, Table~\ref{tab:threshold} 
for Adult became
Table~\ref{tab:threshold-2} for Adult6, which shows the relative 
error of RR-Clusters for $T_v\in \{50,100,300\}$, 
$T_d\in \{0.1,0.2,0.3\}$ and a randomization 
matrix with $p\in \{0.1, 0.3, 0.5, 0.7\}$. By comparing Tables~\ref{tab:threshold} and~\ref{tab:threshold-2}
we observe that the relative error decreased 
for all parameterizations but the reduction achieved depended
on the specific parameterization. Specifically:
\begin{itemize}
\item For $p=0.7$, the reduction was small,
because for so little randomization 
the relative error was already quite low in Table~\ref{tab:threshold}.
In this case, the highest reduction occurred for $T_v=300$, because
Adult6 being larger than Adult, a larger number of category
combinations had a lower negative impact on estimation accuracy.
This highest reduction for $T_v=300$ caused the lowest 
relative error for $p=0.7$ to occur when $T_v=300$, which
shows the advantages of allowing a great number of category
combinations when the data set is sufficiently large.
\item For smaller $p$ (that is, for higher randomization levels),
the reduction in the relative error was more remarkable 
for $T_v=50$ and $T_v=100$. These thresholds on category combinations
could be better accommodated using the larger data set Adult6. 
In contrast, allowing up to $T_v=300$ category combinations 
had an impact on the relative error that was compensated
only partially by the increase in the data set size; hence,
the reduction in the relative error for $p=0.1,0.3, 0.5$
and $T_v=300$ was less spectacular than for lower $T_v$.
Further, unlike for $p=0.7$, the highest $T_v$ did not 
achieve the lowest relative error: this seems to indicate
that, as the randomization level increases, one needs 
larger data set sizes to be able to work with a great
number of category combinations.
\item The effect of the threshold $T_d$ on dependence 
did not seem to change with the data set size.  
\end{itemize}

\begin{table}
	\caption{Relative error of RR-Clusters in the 
Adult6 data set for $T_v\in \{50,100,300\}$, $T_d\in \{0.1,0.2,0.3\}$ and a randomization matrix with $p\in \{0.1, 0.3, 0.5, 0.7\}$}
	\label{tab:threshold-2}
	\centering{}%
	\begin{tabular}{cc|ccc}
			\hline 
			&  &  & $T_{v}$  & \tabularnewline
		 	$p$ & $T_d$ & 50 & 100 & 300\tabularnewline 
			\hline 
                        0.1 & 0.1 & 0.189 & 0.312 & 0.459\tabularnewline
                        0.1 & 0.2 & 0.173 & 0.310 & 0.449\tabularnewline
                        0.1 & 0.3 & 0.183 & 0.339 & 0.462\tabularnewline
                        \hline 
                        0.3 & 0.1 & 0.149 & 0.202 & 0.369\tabularnewline
                        0.3 & 0.2 & 0.171 & 0.225 & 0.376\tabularnewline
                        0.3 & 0.3 & 0.178 & 0.217 & 0.369\tabularnewline
			\hline 
			0.5 & 0.1 & 0.080 & 0.084 & 0.123\tabularnewline
			0.5 & 0.2 & 0.082 & 0.075 & 0.126\tabularnewline
			0.5 & 0.3 & 0.083 & 0.079 & 0.127\tabularnewline
			\hline 
			0.7 & 0.1& 0.064 & 0.066 & 0.056\tabularnewline
			0.7 & 0.2 & 0.064 & 0.066 & 0.057\tabularnewline
			0.7 & 0.3 & 0.065 & 0.065 & 0.060\tabularnewline
			\hline 
	\end{tabular}
\end{table}

\section{Related work}
\label{related}

Privacy preservation in data set releases has a substantial 
tradition in the statistics
and computer science communities. Privacy models such as $k$-anonymity~\cite{Samarati},
$t$-closeness~\cite{Li} and differential privacy~\cite{Dwork2006}, as well as 
many statistical disclosure control techniques~\cite{Hundepool}, have been used 
to protect data sets before releasing them. All these works assume there is
a trusted party that collects the true original data and takes care of protecting them.

Some attempts to enforce privacy models in a distributed manner 
have been made. For example,~\cite{Soria} 
proposed a way to enforce $k$-anonymity by promoting the collaboration between users.
Also, local differential privacy has been proposed
as an adaptation of differential privacy to the untrusted collector
scenario.
 Most of the work in local differential privacy targets the distributed computation of 
data analytics. However, some well-known attempts to generate locally differentially private data sets
have been made. For instance, RAPPOR~\cite{rappor} generates a data set that allows users to compute 
the frequencies of a given set of items. RAPPOR is based on randomized response and Bloom filters.
An extension of RAPPOR is available that allows users to compute multivariate distributions~\cite{Fanti}.
While privacy models usually offer privacy guarantees at the record level ({\em e.g.} $k$-anonymity limits
the chances of succesful re-identification of a record based on quasi-identifier attributes), attaining such
guarantees becomes increasingly difficult as the number of attributes grows. This difficulty is serious in 
the trusted data collector scenario, and {\em a fortiori} in the 
more complex untrusted data collector scenario.

Some masking techniques can be applied locally by each individual before 
releasing her record 
({\em e.g.} noise addition or generalization). However, the lack 
of a global view generally prevents adjusting the
masking to the data set formed by all original individual records ({\em e.g.} 
to preserve covariances between attributes or to apply a
stronger generalization to those values that are rare). 
In this context, randomized response is very convenient because  
despite being a local masking approach it allows 
the data collector to estimate the distribution of the
original data. However, estimating the true original 
joint distribution is only
feasible for a small number of attributes. 
For example, in~\cite{Wang16} an approach is presented that 
starts with RR followed by estimation of the original
distribution for a single binary attribute, then generalizes
to a single multicategory attribute, and finally to several
multicategory attributes (the case we deal with in this paper).
However, the authors of~\cite{Wang16} apply RR independently 
for each attribute, 
in a way similar to our RR-Independent protocol, which does not preserve
the relations between the original attributes. 
Our proposal is to strike a compromise between dimensionality
reduction and preservation of attribute relations by performing
RR on clusters of attributes.

In~\cite{oganpsd2018}, a method was proposed for clustering the attributes
of high-dimensional data sets in view of mitigating the curse
of dimensionality. The authors used hierarchical clustering algorithms
to identify clusters of strongly correlated attributes. A substantial 
difference with our paper is that they considered the centralized anonymization
paradigm, in which a data controller holds the entire original data set
and can compute attribute dependences in a straightforward way. 
In contrast, we deal with local anonymization, in which the original
record of each individual is only known to that individual.

An attribute clustering approach to allow randomized response 
of several attributes was presented in~\cite{psd2018}. 
Unfortunately, this method requires non-negligible 
disclosure of original attributes to compute attribute dependences.
Furthermore, it can yield large 
attribute clusters with too many category
combinations and/or nearly independent attributes, 
which hampers accuracy.
Our attribute
clustering method is superior in several aspects: 
i) it does not cluster attributes unless 
the number of category combinations of the clustered attributes
is below a certain threshold and unless the dependence between
the clustered attributes is above a certain threshold; 
ii) it specifies three carefully crafted procedures to compute
attribute dependences with minimum privacy
loss for the individual parties; iii) our adjustment algorithm
can partly compensate the accuracy lost by assuming independence
between attributes in different clusters.

\section{Conclusions and future research}
\label{conc}

Randomized response is an appealing anonymization approach 
in our big data era for 
several reasons. On the one side, it offers local anonymization
and on the other side it can yield microdata useful
for exploratory analysis and even machine learning. 
The main hindrance for using RR on multivariate data is the curse
of dimensionality: as the number of attributes grows, 
the accuracy of the estimated distribution for the true original
microdata quickly degrades. 

In this paper, we have proposed mitigations to 
the dimensionality problem, based on performing RR
separately for each attribute ---which implicitly
assumes all attributes are (nearly) independent--- 
or jointly within clusters of attributes ---which
needs attributes in different clusters to be weakly
dependent if not independent. 
We have then proposed a method to recover some 
of the estimation accuracy loss incurred by
the above independence assumptions.

The proposed approaches open future research avenues.
Randomized response assumes that all attributes are
categorical or can be made categorical. Thus, a challenge 
is to devise
local anonymization approaches yielding good distribution 
estimates for original
{\em numerical} microdata with a large number
of attributes. Another intriguing issue is
whether there exist alternative ways to recover a larger share
of the utility loss incurred by independence assumptions.
Yet more daunting is to tackle a scenario in which 
all attributes are so correlated that any independence assumption
will result in unaffordable accuracy loss.

\section*{Acknowledgment and Disclaimer}

Thanks go to Rafael Mulero-Vellido for his help in the empirical work.
We are also indebted to Oriol Farr\`as for help with the secure sum
protocol.
Partial support to this work has been received from the European Commission
(project H2020-871042 ``SoBigData++''),
the Government of Catalonia (ICREA Acadèmia Prize to J. Domingo-Ferrer
and grant 2017 SGR 705),
and the Spanish Government (project RTI2018-095094-B-C21 ``CONSENT''). The 
first author is with the UNESCO Chair in
Data Privacy, but the views in this paper are the authors' own and are not
necessarily shared by UNESCO.

\bibliographystyle{acm}

\begin{IEEEbiography}[{\includegraphics[width=1in,height=1.25in,clip,keepaspectratio]{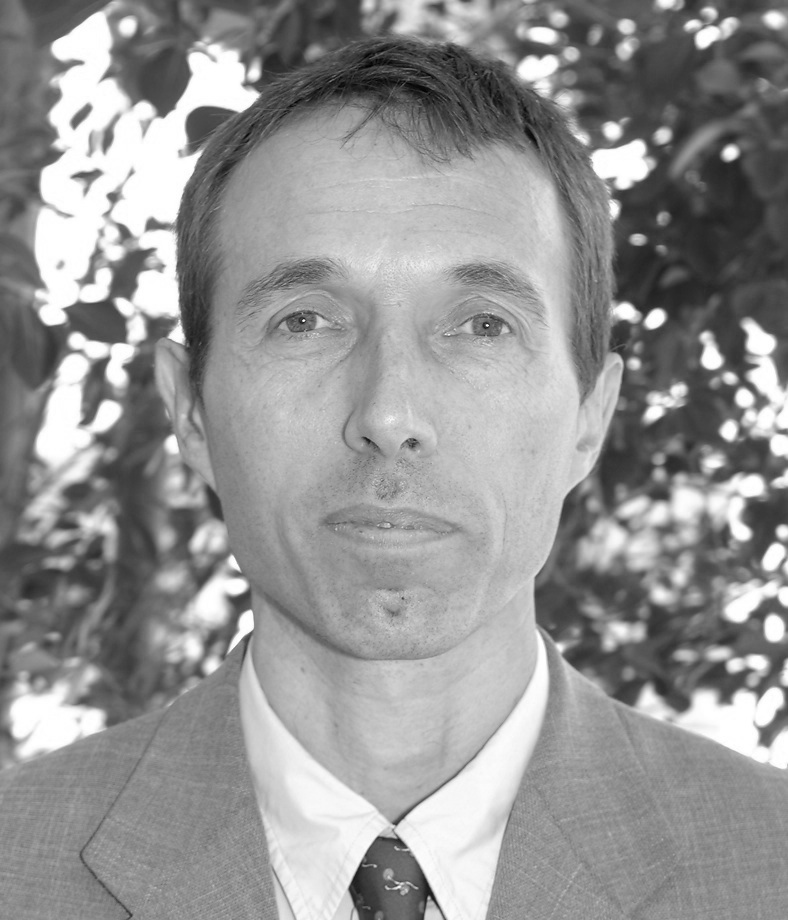}}]{Josep Domingo-Ferrer}
(Fellow, IEEE) is a distinguished professor of computer science and
an ICREA-Acad\`emia researcher at Universitat Rovira i Virgili, Tarragona,
Catalonia, where he holds the UNESCO Chair in Data Privacy and leads
CYBERCAT. He received
the MSc and PhD degrees in computer science from the Autonomous University
of Barcelona in 1988 and 1991, respectively. He also holds an MSc
degree in mathematics. His research interests are in data privacy,
data security and cryptographic protocols.
\end{IEEEbiography}

\begin{IEEEbiography}[{\includegraphics[width=1in,height=1.25in,clip,keepaspectratio]{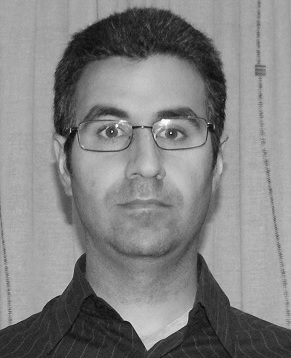}}]{Jordi Soria-Comas}
is the Co-ordinator of Technology and Information Security
at the Catalan Data Protection Authority, in Barcelona. 
He received
an MSc in computer security (2011) and a PhD in computer science (2013)
from Universitat Rovira i Virgili. He also holds an MSc in finance
from the Autonomous University of Barcelona (2004) and a BSc in mathematics
from the University of Barcelona (2003). His research interests are
in data privacy and security.
\end{IEEEbiography}

\end{document}